\documentclass[lettersize,journal]{IEEEtran}

\usepackage{amsmath,amsfonts,amssymb}
\usepackage{amsthm}
\usepackage{algorithm}
\usepackage{algorithmic}
\usepackage{array}
\usepackage{booktabs}
\usepackage{cite}
\usepackage{graphicx}
\usepackage{subcaption}
\usepackage{textcomp}
\usepackage{url}
\usepackage{xcolor}
\usepackage{soul}
\usepackage{hyperref}
\usepackage{cleveref}
\usepackage{balance}

\theoremstyle{plain}

\newtheorem{lemma}{Lemma}

\theoremstyle{definition}
\newtheorem{definition}{Definition}
\newtheorem{assumption}{Assumption}
\theoremstyle{remark}
\newtheorem{remark}{Remark}

\newcommand{\E}{\mathbb{E}}
\newcommand{\Prob}{\mathbb{P}}
\newcommand{\pos}[1]{\left[#1\right]^+}

\newcommand{\safeincludegraphics}[2][]{%
  \IfFileExists{#2}{\includegraphics[#1]{#2}}{%
    \fbox{\parbox[c][3.0cm][c]{0.90\linewidth}{\centering
      \textcolor{red}{\textbf{[FIGURE TODO:} Missing figure: \texttt{\detokenize{#2}}\textbf{]}}}}%
  }%
}

\crefname{figure}{Fig.}{Figs.}
\Crefname{figure}{Fig.}{Figs.}

\crefname{table}{Table}{Tables}
\Crefname{table}{Table}{Tables}

\crefname{section}{Section}{Sections}
\Crefname{section}{Section}{Sections}

\crefname{subsection}{Section}{Sections}
\Crefname{subsection}{Section}{Sections}

\crefname{subsubsection}{Section}{Sections}
\Crefname{subsubsection}{Section}{Sections}

\crefname{appendix}{Appendix}{Appendices}
\Crefname{appendix}{Appendix}{Appendices}

\crefname{equation}{equation}{equations}
\Crefname{equation}{Equation}{Equations}

\crefformat{equation}{(#2#1#3)}
\Crefformat{equation}{Equation~(#2#1#3)}

\crefrangeformat{equation}
  {(#3#1#4)--(#5#2#6)}
\Crefrangeformat{equation}
  {Equations~(#3#1#4)--(#5#2#6)}

\crefmultiformat{equation}
  {(#2#1#3)}
  { and~(#2#1#3)}
  {, (#2#1#3)}
  {, and~(#2#1#3)}

\Crefmultiformat{equation}
  {Equations~(#2#1#3)}
  { and~(#2#1#3)}
  {, (#2#1#3)}
  {, and~(#2#1#3)}

\crefrangemultiformat{equation}
  {(#3#1#4)--(#5#2#6)}
  { and~(#3#1#4)--(#5#2#6)}
  {, (#3#1#4)--(#5#2#6)}
  {, and~(#3#1#4)--(#5#2#6)}

\Crefrangemultiformat{equation}
  {Equations~(#3#1#4)--(#5#2#6)}
  { and~(#3#1#4)--(#5#2#6)}
  {, (#3#1#4)--(#5#2#6)}
  {, and~(#3#1#4)--(#5#2#6)}

\crefname{theorem}{Theorem}{Theorems}
\Crefname{theorem}{Theorem}{Theorems}

\crefname{lemma}{Lemma}{Lemmas}
\Crefname{lemma}{Lemma}{Lemmas}

\crefname{algorithm}{Algorithm}{Algorithms}
\Crefname{algorithm}{Algorithm}{Algorithms}

\crefname{proposition}{Proposition}{Propositions}
\crefname{corollary}{Corollary}{Corollaries}
\crefname{definition}{Definition}{Definitions}
\crefname{assumption}{Assumption}{Assumptions}
\crefname{remark}{Remark}{Remarks}

\begin{document}

\title{A Smallest-Need-First Job Scheduling Framework with Adaptive Optimization of Idle Node Counts for Energy-Efficient HPC Systems}

\author{Reza Pulungan, Raka Satya Prasasta, Santana Yuda Pradata, Mursalim, Hiroyuki Takizawa, \\ Muhammad Alfian Amrizal%
\thanks{R. Pulungan, S. Y. Pradata, Mursalim, and M. A. Amrizal are with the Department of Computer Science and Electronics, Universitas Gadjah Mada, Yogyakarta, Indonesia.}%
\thanks{R. S. Prasasta is with the Faculty of Industrial Technology, Universitas Ahmad Dahlan, Yogyakarta, Indonesia.}%
\thanks{H. Takizawa is with the Cyberscience Center, Tohoku University, Miyagi, Japan.}%
\thanks{This work has been submitted to the IEEE Transactions on Parallel and Distributed Systems for possible publication.}
}

\markboth{}%
{Pulungan \MakeLowercase{\textit{et al.}}: SNF-ICON}

\maketitle

\begin{abstract}
    Power-state management in high-performance computing (HPC) clusters must reduce idle energy without excessive wake-up delays for rigid parallel jobs. This paper presents SNF-ICON, an event-driven controller combining smallest-need-first (SNF) gang scheduling, predictive wake timing, and adaptive warm-spare control. At each scheduler invocation, recent interarrival and completed-service samples are screened for sufficiency, exponential-like variability, low lag-one autocorrelation, and acceptable Kolmogorov–Smirnov distance. Rejected or data-sparse windows use SNF+IPM (Intelligent Power Manager), whereas accepted windows activate release prediction and an exponential next-event model. Warm-spare optimization is applied only when queue, event, and arrival-recency conditions permit, balancing estimated waiting and non-compute energy over a timeout-capped horizon. We evaluate four DAS2 trace segments and a generated Markovian workload on AOBA-derived 64-node models, plus SDSC Blue on an AOBA-derived 1152-node model. SNF-ICON is compared with SNF+IPM and First Come First Served (FCFS) + backfilling with IPM. It reduces average waiting time relative to the FCFS-based baseline in all six cases and remains close to at least one heuristic energy baseline in five. The generated workload spends substantial time in ICON mode, whereas DAS2 workloads operate mainly in fallback. Furthermore, cross-platform results show strong dependence on node-transition and power models. Thus, no single policy or parameter set works best in every case.
\end{abstract}

\begin{IEEEkeywords}
High-performance computing, energy-aware scheduling, gang scheduling, power-state management, smallest-need-first scheduling, stochastic spare capacity, workload screening, runtime prediction, fallback control.
\end{IEEEkeywords}

\section{Introduction}
\label{sec:introduction}

\IEEEPARstart{H}{igh}-performance computing (HPC) systems use large amounts of energy, and a single installation may require 20~MW or more. At the German HPC centers surveyed by Suarez et al. \cite{estela2025energyaware}, electricity and cooling made up between 12\% and 50\% of the systems' total cost of ownership. One source of avoidable energy use is idle compute capacity: nodes may stay powered on for long periods without running jobs. Energy-aware HPC schedulers therefore often use dynamic power management, such as lowering operating frequencies or switching off idle nodes~\cite{kocot2023energy}.

Production systems have also shown that idle-node power management can save substantial energy. At the J{\"u}lich Supercomputing Centre, the Slurm workload manager~\cite{yoo2003slurm} was configured to power down idle nodes. After this change, the annual electricity use of the JUSUF system fell from about 660~MWh in 2022 to 495~MWh in 2023, a reduction of 25\%~\cite{estela2025energyaware}. This result shows the practical value of controlling node power states. It also raises an important question: how many nodes can be switched off without making jobs wait too long?

Switching off too many nodes can increase job waiting time because a sleeping node must wake before it can run a job. Keeping too many nodes active reduces this delay but wastes energy while they are idle. This trade-off is especially important for rigid parallel, or gang-scheduled, jobs. A job that requests $r$ nodes can start only when all $r$ nodes are available at the same time. Job order and node power states should therefore be managed together~\cite{obrien2017, maiterth2018, gandhi2013}.

Existing methods include idle-timeout rules \cite{hu2009power, kammeyer2025slurm}, reinforcement-learning controllers \cite{liu2017hierarchical, farahnakian2014energy}, and scheduler--power-manager heuristics \cite{dupont2020energy, castan2024pareto}. Learning-based controllers can learn complex switch-on and switch-off actions, but they have several practical difficulties. Their reward must balance waiting time and energy, training needs a large amount of data and interaction, and the learned policy may work poorly when the workload or data quality changes~\cite{khasyah2022, budiarjo2025, sutton1998}. PSAS+IPM~\cite{prasasta2026} avoids offline training by using clear scheduling and power-management rules, but it still uses First Come First Served (FCFS) with Extensible Argonne Scheduling sYstem (EASY) backfilling~\cite{utilpred,lifka1995}. EASY backfilling is an aggressive backfilling policy in which later jobs may fill gaps in the schedule, provided that they do not delay the reserved start time of the head-of-queue job.

The scheduling method matters because power management cannot remove delays caused by job order. In a large multiserver-job model, Hong and Wang \cite{hong2024journal,hong2024} showed that FCFS mean waiting time grows worse than a lower bound that applies to all policies, while Smallest-Need-First (SNF) reaches that lower bound in growth rate. Their theorem does not directly cover EASY backfilling, finite non-preemptive HPC traces, or node wake-up times. Even so, it gives a clear reason to compare an SNF-based scheduler with an FCFS-based scheduler. In this paper, a job's need is its requested node count, and jobs are not preempted.

Changing the queue order does not solve the power-state problem by itself. A method such as PSAS+IPM can plan node transitions for queued jobs, but it uses requested runtimes, which may differ greatly from actual runtimes \cite{utilpred, tsafrir2007}. Planning for queued jobs also does not decide how much capacity should remain ready for future jobs. With perfect future information, a controller could wake exactly the required nodes so that they become ready when the next job arrives. In practice, future arrival times and node requests are unknown. The controller must therefore choose how many warm spare nodes to keep: too few cause wake-up delay, while too many waste idle energy.

This paper proposes SNF-ICON, an event-driven method that combines SNF scheduling, predicted wake times for queued jobs, and adaptive control of warm spare nodes for future jobs. Its main decision is the number of spare nodes to keep ready. The one-step calculation assumes exponential times between arrivals and completions, but real workloads do not follow this model at all times. We therefore check recent interarrival and completed-service data before using the calculation. When the recent data pass the check, the ICON controller predicts job releases and selects a warm-spare target. When the data fail the check, or there are too few samples, the system uses the SNF+IPM fallback. This limits model-based planning to periods when the recent workload fits the model closely enough. Furthermore:
\begin{itemize}
    \item We compare SNF with FCFS+backfilling-based power-management methods for rigid parallel jobs, motivated by a theoretical result on mean waiting time for multiserver jobs.
    \item We develop a one-step method for choosing how many warm spare nodes to keep ready.
    \item We define a fallback rule that uses SNF+IPM when recent data fail the workload check or contain too few samples. ICON-specific release prediction and spare-node selection are used only when the recent data pass the check.
    \item We predict job runtimes using an adaptive log ratio of actual to requested runtime. For running jobs, we estimate the remaining time with a conditional log-normal model. We combine these predictions with EMA estimates of arrival rate, service rate, and requested-node counts.
    \item We evaluate the method against FCFS/B+IPM and SNF+IPM, including analyses of the workload check, fallback, arrival-recency gate, platform, workload, parameters, and large-job waiting time.
\end{itemize}

The rest of the paper reviews related power-management methods, explains the choice of SNF and warm spare nodes, defines the system model and SNF-ICON method, and presents the implementation and evaluation.

\section{Background and Motivation}
\label{sec:background}

\subsection{Power management above the batch scheduler}

Switching off an idle node can save energy, but switching it off and waking it later also takes time and energy. A common rule is an idle timeout: a node stays active for a fixed time after becoming idle and is switched off if no job uses it. Benoit et al. \cite{benoit2017shutdown} studied shutdown policies that consider transition time, transition energy, power limits, and renewable energy. They used measurements from the Taurus cluster to model node transitions. These rules are easy to understand and deploy, but they react only after a node becomes idle. They do not decide how many nodes should stay ready for an unknown future job.

Reinforcement-learning (RL) power managers try to learn this decision from past workloads. Khasyah et al. \cite{khasyah2022} used an advantage actor--critic agent to choose switch-on and switch-off actions, and Budiarjo et al. \cite{budiarjo2025} used curriculum learning to improve training. These methods can learn complex policies, but they are difficult to deploy for the goals studied here. Waiting time and energy must be combined in the reward, training requires substantial data and time, and a policy trained on one system may not work well when the workload changes, events are sparse, user runtime estimates are inaccurate, or the job mix changes. Both methods also keep FCFS with backfilling as the scheduler, so they change the power manager without reconsidering job priority.

PSAS+IPM is a rule-based alternative that does not require training \cite{prasasta2026}. The power-state-aware scheduler (PSAS) includes node power states when dispatching and planning jobs, while the intelligent power manager (IPM) controls wake-up and shutdown actions. It can prepare nodes for queued jobs, but it predicts node release times from requested runtimes. Requested runtimes are useful, but they often differ from actual runtimes, so past execution data may improve the estimates \cite{utilpred, tsafrir2007}. PSAS+IPM also does not choose a warm-spare target for future jobs. Like the learning-based methods above, its original scheduler uses FCFS with EASY backfilling.

\subsection{Why reconsider FCFS+backfilling?}

FCFS with backfilling is widely used because it mostly keeps jobs in arrival order while allowing a later job to use idle resources when doing so does not delay a protected earlier job \cite{lifka1995, utilpred, tsafrir2007}. However, it does not minimize average waiting time for rigid parallel jobs. A rigid job is a multiserver job because it needs several nodes at the same time and cannot start with only part of its requested capacity \cite{feitelson1992, lublin2003}.

Hong and Wang \cite{hong2024} studied a multiserver-job system with $n_{\mathrm{HW}}$ servers under polynomial scaling. In their setting, the maximum server need is $\ell_{\max}=n_{\mathrm{HW}}^{\gamma_{\mathrm{HW}}}$, the spare capacity is $\delta_{\mathrm{HW}}=n_{\mathrm{HW}}^{\alpha_{\mathrm{HW}}}$, and the total arrival rate is $\Theta(n_{\mathrm{HW}})$, where $0\le\gamma_{\mathrm{HW}}<\alpha_{\mathrm{HW}}<(1+\gamma_{\mathrm{HW}})/2$. Under their heavy-traffic and maximum-need assumptions, and their commonness assumption for the SNF upper bound, the waiting-time growth rates are
\begin{align}
\E[T_{\mathrm{wait}}^{\mathrm{FCFS}}]&=\Theta\!\left(n_{\mathrm{HW}}^{\gamma_{\mathrm{HW}}-\alpha_{\mathrm{HW}}}\right),
\nonumber \\
\E[T_{\mathrm{wait}}^{\pi}]&=\Omega\!\left(n_{\mathrm{HW}}^{-\alpha_{\mathrm{HW}}}\right)
\quad \text{for every policy }\pi,
\label{eq:waiting-lower-bound}
\end{align}
and
\begin{equation}
\E[T_{\mathrm{wait}}^{\mathrm{SNF}}]=\Theta\!\left(n_{\mathrm{HW}}^{-\alpha_{\mathrm{HW}}}\right).
\label{eq:snf-achieves-bound}
\end{equation}
These results show that SNF reaches the lower-bound growth rate, while FCFS has a worse growth rate when $\gamma_{\mathrm{HW}}>0$, meaning that the largest job size grows with the system.

Hong and Wang's SNF model allows preemption and assumes that all servers are always available. Their theorem does not include node power states, EASY backfilling, finite traces, non-preemptive jobs, or node transition delays. We therefore use the theorem only as motivation for testing SNF. We do not claim that SNF is optimal for the system in this paper. The comparison with FCFS/B is based on experiments.

\subsection{Why future jobs require warm spare nodes?}

SNF decides which queued job is considered first, but it does not decide when sleeping nodes should wake or how many idle nodes should stay ready for future jobs. Queued jobs provide requested node counts and requested runtimes, so the scheduler can plan node releases and wake-ups for them. This plan is not exact because requested and actual runtimes may differ, but it is based on jobs that are already known.

The harder problem begins after all queued jobs have been handled. With perfect knowledge of the next job's arrival time and node request, a controller could wake exactly the required number of sleeping nodes so that they become ready when the job arrives. Waking them too late increases job waiting time, while waking them too early wastes energy. Because the next job is unknown, the controller instead keeps a chosen number of \emph{warm spare nodes}. These are active-idle nodes that can run a new job without a wake-up delay.

The main question is how many warm spare nodes to keep at each decision time. The proposed method estimates whether the next event will be a job arrival or a job completion, then balances wake-up-related waiting time against non-compute energy. Because this calculation uses a Markovian approximation, the controller first checks whether recent workload data fit that approximation closely enough. It prepares warm spare nodes when the data pass the check and uses SNF+IPM otherwise.

\section{System Model}
\label{sec:system}

\subsection{Cluster, jobs, and node states}

\begin{definition}[Node state partition]
\label{def:state-partition}
At time $t$, the node set of a cluster with $N\in\mathbb{N}_{>0}$ nodes is partitioned into
\begin{equation}
\mathcal{C}(t),\;\mathcal{I}(t),\;\mathcal{U}(t),\;\mathcal{D}(t),\;\mathcal{S}(t),
\end{equation}
representing computing, active-idle, switching-on, switching-off, and sleeping nodes, respectively. The five sets are pairwise disjoint, and their union is the complete node set.
\end{definition}

Each node $i$ is characterized by nonnegative state-dependent power values $P_i^{\mathrm{idle}}$, $P_i^{\mathrm{sleep}}$, $P_i^{\uparrow}$, and $P_i^{\downarrow}$, and by nonnegative expected transition times $L_i^{\uparrow}$ and $L_i^{\downarrow}$. The superscripts $\uparrow$ and $\downarrow$ indicate the switching-on and switching-off states, respectively.

\begin{definition}[Rigid job]
\label{def:rigid-job}
A job $j$ is described by an identifier $z_j$, a requested node count $r_j\in\{1,\ldots,N\}$, and a requested runtime $\tau_j>0$. It may start only when all $r_j$ nodes are available simultaneously \cite{feitelson1992,lublin2003}. The waiting queue at time $t$ is denoted by $\mathcal{Q}(t)$.
\end{definition}

\subsection{Smallest-need-first scheduling}

\begin{definition}[Smallest-need-first priority]
\label{def:snf-priority}
For waiting jobs $j$ and $k$, write $j\prec_{\mathrm{SNF}}k$ when
\begin{equation}
(r_j,\tau_j,z_j)<_{\mathrm{lex}}(r_k,\tau_k,z_k).
\label{eq:snf-key}
\end{equation}
Jobs are ordered first by requested node count. Requested runtime breaks ties between jobs of the same size, and the identifier breaks any remaining ties. In this study, job identifiers are assigned in ascending order based on their arrival.
\end{definition}

At each scheduling call, the queue is sorted using~\cref{eq:snf-key} and checked from the first job. A job starts immediately when at least $r_j$ suitable idle nodes are available. The check can stop at the first job that does not fit because every later job requests at least as many nodes and therefore cannot fit during the same call.

\section{The Markovianity-Gated SNF-ICON Heuristic}
\label{sec:method}


The proposed method must determine the number of warm spare nodes to prepare for future jobs. The one-step spare objective is tractable when the next arrival and completion are each Markovian. Real-life scenario workloads are not generally Markovian over an entire trace, but shorter recent windows may be sufficiently close to a Markovian system for a time-local decision. We therefore use a recent-window screen to choose between two operating modes at every scheduler invocation. Accepted windows, those deemed sufficiently similar to Markovian systems, use the proposed SNF-ICON spare calculation. Meanwhile, rejected or data-sparse windows use the SNF+IPM path, which combines SNF scheduling and queued-job wake planning with IPM but omits spare nodes planning.

Throughout this section, $t$ denotes the current scheduler time, $N$ is the total number of cluster nodes, $|\cdot|$ denotes the set cardinality, $\mathbf{1}\{\cdot\}$ is an indicator function, and $\pos{y}=\max\{y,0\}$. A hat denotes an estimated quantity, while a superscript ``$+$'' denotes the value after incorporating the newest observation. Unless otherwise stated, all time quantities use the simulator's time unit, which is in seconds.

\subsection{Markovianity of the observed workload}
\label{subsec:markov-screen}

The spare node objective metric requires a model that takes into account a race between the next job arrival and the next job completion. Consequently, the screen examines the two time series that determine those clocks: interarrival time intervals that describe the arrival process, and completed-job execution times that describe the service process. Meanwhile, requested-node counts are modeled separately through an empirical probability mass function, and they need not themselves be exponentially distributed. Screening both temporal series reduces the risk of applying a Markovian next-event model when either exhibits strong non-Markovian behavior.

Interarrival intervals are timestamped by their later arrival, and service durations by their completion. In the generic notation below, $\xi_i$ denotes a retained interarrival interval for the arrival process or a retained observed completed-job execution time for the service process. The screen takes into account only samples within the last $T_M>0$ seconds. Positive integers $n_{\min}\le n_{\max}$ set the minimum required sample count and cap the sample count during bursts. For a recent series $\xi_1,\ldots,\xi_n$, the screen first requires $n_{\min}\le n\le n_{\max}$. It then computes the coefficient of variation
\begin{equation}
 \mathrm{CV}=\frac{\sqrt{n^{-1}\sum_{i=1}^{n}(\xi_i-\bar{\xi})^2}}{\bar{\xi}}
\label{eq:screen-cv}
\end{equation}
and accepts the variability check when
\begin{equation}
 |\mathrm{CV}-1|\le\delta_{cv},
\label{eq:screen-cv-rule}
\end{equation}
where $\delta_{cv}\ge0$ is the tolerance around the exponential-distribution value $\mathrm{CV}=1$ \cite{gross2011}. 

Next, define the two lagged subsequences
\begin{equation}
\xi^{-}
=
(\xi_1,\ldots,\xi_{n-1}),
\qquad
\xi^{+}
=
(\xi_2,\ldots,\xi_n).
\end{equation}
The lag-one
autocorrelation is estimated as their sample Pearson correlation~\cite{pearson}:
\begin{equation}
\widehat{\rho}_1
=
\frac{
\widehat{\operatorname{Cov}}
\left(\xi^{-},\xi^{+}\right)
}{
\sigma_{\xi^{-}}\sigma_{\xi^{+}}
}.
\label{eq:screen-acf-value}
\end{equation}
Here, $\widehat{\rho}_1$ must satisfy
\begin{equation}
 |\widehat{\rho}_1|\le\delta_{\rho},
\label{eq:screen-acf}
\end{equation}
where $\delta_{\rho}\ge0$ is the maximum allowed absolute lag-one autocorrelation. This is motivated by the fact that, for an independent and identically distributed sequence of exponentially distributed observations, the population lag-one autocorrelation is zero, while the sample lag-one autocorrelation converges to zero as the sample size increases \cite{bartlett1946theoretical}. 

Finally, let $D_n^{\mathrm{KS}}$ be the two-sided Kolmogorov--Smirnov distance between the empirical cumulative distribution function and the fitted exponential CDF $1-\exp(-\xi/\bar{\xi})$ \cite{lilliefors1969}. The exponential-fit rule is
\begin{equation}
 D_n^{\mathrm{KS}}\le\frac{\kappa_m\kappa_0}{\sqrt{n}},
\label{eq:screen-ks}
\end{equation}
where $\kappa_0>0$ is the base acceptance scale and $\kappa_m>0$ is a configurable multiplier.

\begin{definition}[Series and workload acceptance]
\label{def:markov-acceptance}
A recent series is \emph{accepted} when it has at least $n_{\min}$ samples and satisfies every enabled condition in~\crefrange{eq:screen-cv-rule}{eq:screen-ks}. Let $A_A(t)$ and $A_S(t)$ denote arrival- and service-series acceptance. If service samples are required, then
\begin{equation}
A(t)=A_A(t)\land A_S(t).
\label{eq:workload-acceptance-required}
\end{equation}
If service samples are optional, then $A(t)=A_A(t)$.
\end{definition}

\begin{remark}[Operational interpretation]
The screen does not prove that the workload is Markovian, and it is not a standard hypothesis test. The data windows may overlap, and the thresholds are chosen settings. The controller checks the screen again at every scheduler call. An accepted result only means that the recent data are close enough to the exponential model for the controller to use it.
\end{remark}

\subsection{Policy and mode switching}

\begin{algorithm}[t]
\caption{SNF-ICON}
\label{alg:snf-icon}
\begin{algorithmic}[1]
\STATE Refresh node and queue state; consume new arrivals and completions. Update interarrival, service, runtime-ratio, and requested-node statistics.
\STATE Evaluate the recent-window screen and obtain $A(t)$.
\IF{$A(t)=0$}
    \STATE Execute baseline SNF+IPM immediate dispatch and future planning.
    \STATE Return (go to the end of the algorithm)
\ENDIF
\STATE Enter the ICON path.
\STATE Start every SNF-ordered job that fits immediately.
\STATE Predict active-job releases and build a sequential future-SNF plan for the remaining queue.
\IF{the remaining queue is empty and $t\in\mathcal{E}$}
    \IF{the arrival-recency gate is open}
        \STATE Compute the optimal number of warm-spare nodes and apply the selected plan.
    \ELSIF{the call contains a completion}
        \STATE Immediately switch off newly released, unreused nodes.
    \ENDIF
\ENDIF
\IF{the remaining queue is empty}
    \STATE Apply the hard idle-timeout policy.
\ENDIF
\STATE Record the mode, screen, gate, predictions, costs, and actions.
\end{algorithmic}
\end{algorithm}

\begin{definition}[Invocation, mode, and spare-planning point]
\label{def:decision-mode}
Let $\mathcal{V}$ denote the set of scheduler invocations, including the initial call, arrival and completion calls, and transition or timeout callbacks. The operating mode is
\begin{equation}
M(t)=
\begin{cases}
1 \qquad\text{(SNF+IPM fallback)}, & A(t)=0,\\
2 \qquad\text{(ICON)}, & A(t)=1.
\end{cases}
\label{eq:mode-selection}
\end{equation}
Let $\mathcal{E}\subset\mathcal{V}$ contain the initial call, arrival, and completion calls. A \emph{spare-planning point} is an invocation time $t\in\mathcal{E}$ at which no job remains in the queue after immediate dispatch. 
\end{definition}

At every invocation, newly appended monitor records are consumed before $A(t)$ is evaluated. The selected mode applies only to the current call. A rejected window therefore enters fallback only for the current call, and a later window may return to ICON mode. 

The overview of the SNF-ICON policy is shown in~\cref{alg:snf-icon}. It separates the handling of queued jobs from the handling of future jobs. Each invocation first updates the state and evaluates $A(t)$. Rejected windows delegate the complete invocation to SNF+IPM, whereas accepted windows use ICON-specific release prediction and future planning. Warm-spare optimization is considered only after all queued jobs have started execution, the invocation is a qualifying decision epoch, and the arrival-recency gate is open. Thus, jobs that have arrived always take precedence over future job preparation.

In cases where the arrival-recency gate is closed, the system remains in ICON mode because it suppresses warm spare node preparation but does not select the SNF+IPM fallback. It is important to note that $A(t)=1$ does not by itself trigger warm spare optimization. The ICON path must first leave the job queue empty by executing all jobs that have arrived, the invocation must be a qualifying decision epoch, and the arrival-recency gate must be open.


\subsection{Queued-job handling in the two operating modes}
\label{sec:visible-work}

Both operating modes use SNF scheduling for queued jobs. The main difference is how they predict node release times and plan wake-ups. In SNF+IPM fallback mode, the scheduler uses the baseline SNF+IPM procedure and then returns. In ICON mode, the controller first updates its arrival, service, and requested-node estimates, as described in~\cref{subsec:online-model}, and then builds the future plan.

In ICON mode, queued jobs are sorted using the SNF key in~\cref{eq:snf-key}. The scheduler starts each job that fits using the active-idle nodes available at time $t$. The scan stops at the first job that cannot fit, as described in~\cref{sec:system}.

For each job left in the queue, the controller estimates the earliest time when enough nodes will be available. It then selects the required nodes using the cost key defined below.

\begin{definition}[Sequential future-SNF plan]
\label{def:future-plan}
Let the jobs left after immediate dispatch be $j_1,\ldots,j_q$ in SNF order. The scheduler plans their executions one at a time. Before planning job $j_k$, let $\widehat{t}^{\mathrm{rel}}_{i,k}$ be the time when node $i$ is expected to become available, including assignments already planned for earlier jobs. If fewer than $r_{j_k}$ nodes can be included in the plan, planning stops. Let $b_k$ be the \emph{planning barrier}, initialized as $b_1=t$. For each job, let $\widehat{t}^{\mathrm{rel}}_{(r_{j_k}),k}$ be the $r_{j_k}$th earliest estimated node-release time. The planned execution start time for job $j_k$ is
\begin{equation}
 s_{j_k}=\max\left\{b_k,
 \widehat{t}^{\mathrm{rel}}_{(r_{j_k}),k}\right\}.
\label{eq:planned-start}
\end{equation}
Among the nodes expected to be available by $s_{j_k}$, the scheduler selects the $r_{j_k}$ nodes with the lowest estimated non-compute energy before the planned start, followed by node-state priority, remaining idle-timeout priority, and node identifier $i$. Node states are prioritized in the following order: idle, computing, switching on, and sleeping or switching off. For an idle node, the timeout priority is defined as the negative of its remaining idle timeout. Consequently, a longer remaining timeout yields a lower priority value and is selected first, leaving nodes closer to timeout available for power-off. Let the selected node set be $B_{j_k}$. Next, the planned (estimated) finish time for job $j_k$ is
\begin{equation}
 f_{j_k}=s_{j_k}+\widehat{D}_{j_k},
\label{eq:planned-finish}
\end{equation}
where $\widehat{D}_{j_k}$ is the predicted execution time of the queued job (see Appendix A
of the supplemental materials). A selected sleeping node $i$ is scheduled to wake at
\begin{equation}
 u_{i,j_k}=s_{j_k}-L_i^{\uparrow}.
\label{eq:planned-node-wake-pairs}
\end{equation}

After job $j_k$ is added to the plan, the estimated next-available time of every node in $B_{j_k}$ is changed to $f_{j_k}$. The next job uses $b_{k+1}=s_{j_k}$, so planned start times do not move backward.
\end{definition}

Different queued jobs may have the same planned start when they use different nodes. However, a later job in SNF order is never planned to start before an earlier one.

If a sleeping node is selected for more than one planned job, the scheduler keeps its earliest wake time:
\begin{equation}
 t_i^{\uparrow}=\min_{j:i\in B_j}\left(s_j-L_i^{\uparrow}\right),
 \qquad i\in\mathcal{S}(t).
\label{eq:planned-wake}
\end{equation}
If this time has already arrived, the node is switched on immediately. Otherwise, a callback is scheduled for that time. If the node is currently switching off, it is switched on after reaching the sleeping state, as long as this still meets the planned start. The sequential future-SNF plan is illustrated in~\cref{fig:seq-snf-plan}.

\begin{figure*}[t]
    \centering
    \includegraphics[width=0.98\linewidth]{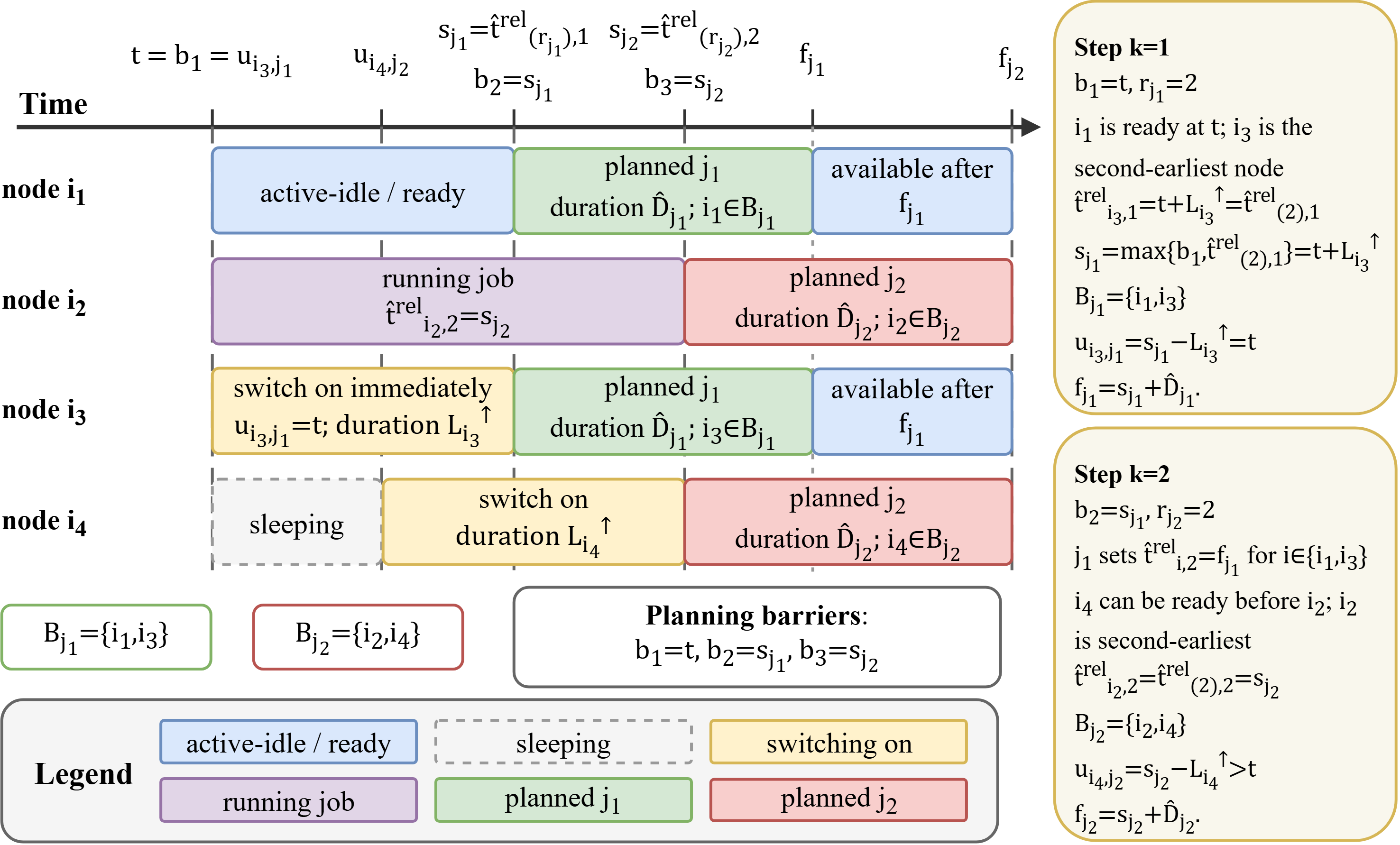}
    \caption{Sequential future-SNF plan for the remaining queued jobs, shown as a Gantt chart after immediate dispatch. The current scheduler time is $t$, and the remaining queue is ordered as $j_1 \prec_{\mathrm{SNF}} j_2$, with $r_{j_1}=r_{j_2}=2$. Node $i_3$ is instructed to wake immediately at $u_{i_3,j_1}=b_1=t$, causing job $j_1$ to start after the wake-up transition at $s_{j_1}=t+L_{i_3}^{\uparrow}$. After $j_1$ is added to the plan, the release map is updated and the next planning barrier becomes $b_2=s_{j_1}$. Job $j_2$ is then planned to start at $s_{j_2}$, with node $i_4$ scheduled to wake at $u_{i_4,j_2}=s_{j_2}-L_{i_4}^{\uparrow}$. The selected node sets are $B_{j_1}=\{i_1,i_3\}$ and $B_{j_2}=\{i_2,i_4\}$, and the planned finish time of each job is $f_{j_k}=s_{j_k}+\widehat{D}_{j_k}$.}
    \label{fig:seq-snf-plan}
\end{figure*}

Meanwhile, the SNF+IPM fallback mode plans queued jobs and their wake times using the baseline SNF+IPM estimates and procedures~\cite{prasasta2026}. It therefore keeps normal queued-job handling while leaving out ICON-specific prediction and warm-spare planning.

\subsection{Adaptive arrival, service, and resource estimates}
\label{subsec:online-model}

Let $\Delta_A$ be the newest observed interarrival interval (the newest observation of the arrival-screen series denoted generically by $\xi_i$ above), let $m_A$ be the current exponential moving average (EMA) of interarrival time, and let $\rho_A\in(0,1]$ be the arrival smoothing factor \cite{roberts2000}. The updated EMA is
\begin{equation}
 m_A^{+}=\rho_A \Delta_A+(1-\rho_A)m_A,
\qquad
\widehat{\lambda}(t)=\frac{1}{\max\{\epsilon,m_A^{+}\}}.
\label{eq:arrival-ema}
\end{equation}
After the update, $m_A$ is replaced by $m_A^{+}$. Here, $\widehat{\lambda}(t)$ is the estimated arrival rate and $\epsilon>0$ is a small numerical constant that prevents division by zero. Zero gaps between arrivals at the same simulation timestamp are retained. At initialization, $m_A=1/\lambda_0$, where $\lambda_0>0$ is the configured initial arrival rate.

Let $m_S$ be the EMA of observed completed-job durations. Before a usable completion exists, $m_S^{\mathrm{req}}$ stores an EMA of requested-duration proxies. Both service EMAs use smoothing factor $\rho_S\in(0,1]$. The estimated service rate is
\begin{equation}
\widehat{\mu}(t)=
\begin{cases}
1/\max\{\epsilon,m_S\}, & m_S\text{ is available},\\
1/\max\{\epsilon,m_S^{\mathrm{req}}\}, & \text{otherwise},\\
0, & \text{if neither is available}. 
\end{cases}
\label{eq:service-rate}
\end{equation}


For each possible requested node count $r\in\{1,\ldots,N\}$, let $w_r\ge0$ be its adaptive weight. When the newest job requests $r^{\prime}$ nodes, the weights are updated using the resource smoothing factor $\rho_R\in(0,1]$:
\begin{align}
 w_r^{+}&=(1-\rho_R)w_r+\rho_R\mathbf{1}\{r=r^{\prime}\}, \nonumber
\\
\widehat{p}_r(t)&=\frac{w_r^{+}}{\sum_{k=1}^{N} w_k^{+}}.
\label{eq:resource-ema}
\end{align}
After the update, each $w_r$ is replaced by $w_r^{+}$. Thus, $\widehat{p}_r(t)$ is the estimated probability that the next job requests $r$ nodes. 
The weights are initialized to $w_{r_0}=1$ and $w_r=0$ for all $r\neq r_0$, where $r_0$ is the requested node count for the first job in the workload. Hence, the normalizing denominator is positive. Weights below the numerical tolerance are removed and the survivors are renormalized; if numerical pruning removes every weight, the implementation resets to $w_1=1$. Before any request is observed, this gives $\widehat{p}_{r_0}(t)=1$.

\subsection{Predicting the next event and time window}

For each active job $j$, let $\widehat{R}_j(t)$ denote the predicted remaining runtime of a running job (derived in Appendix A of the supplemental materials).
These job-specific estimates are used both in the predictive release map and in the completion-rate approximation below.

\begin{assumption}[One-step competing-event approximation]
\label{ass:event-race}
At an ICON decision, the time to the next arrival is modeled as exponential with rate $\widehat{\lambda}(t)$, the time to the next completion is approximated as exponential with rate $\widehat{\lambda}_C(t)$, and the two times are treated as independent from each other.
\end{assumption}
For independent exponential clocks, the minimum is exponential with a rate equal to the sum of the component rates, and the probability that a component clock occurs first is proportional to its rate \cite{norris1997,gross2011}.

In ICON mode, let $n_{\mathrm{act}}(t)$ be the number of distinct active jobs and define the set of usable positive completion predictions as
\begin{equation}
 \mathcal{J}_{\mathrm{pred}}(t)=
 \left\{j:\ 0<\widehat{R}_j(t)<\infty\right\}.
\end{equation}
The earliest predicted remaining runtime is
\begin{equation}
 d_C(t)=\min_{j\in\mathcal{J}_{\mathrm{pred}}(t)}\widehat{R}_j(t),
 \qquad \min\varnothing:=\infty.
\label{eq:next-completion-delay}
\end{equation}
The estimated completion-event rate is
\begin{equation}
 \widehat{\lambda}_C(t)=
 \begin{cases}
 1/d_C(t), & 0<d_C(t)<\infty,\\
 n_{\mathrm{act}}(t)\widehat{\mu}(t), & \text{otherwise}.
 \end{cases}
\label{eq:completion-hazard}
\end{equation}
Thus, the first branch uses the nearest job-specific completion prediction, while the second uses the aggregate service-rate approximation. The total next-event rate $\widehat{\lambda}_E(t)$ and the probability $\pi_A(t)$ that the next event is an arrival are
\begin{align}
 \widehat{\lambda}_E(t)&=\widehat{\lambda}(t)+\widehat{\lambda}_C(t),\nonumber\\
 \pi_A(t)&=\begin{cases}
 \widehat{\lambda}(t)/\widehat{\lambda}_E(t), & \widehat{\lambda}_E(t)>0,\\
 0, & \widehat{\lambda}_E(t)=0.
 \end{cases}
\label{eq:race}
\end{align}

Let $d_{\mathrm{to}}\ge0$ be the time remaining until the earliest future idle-node timeout, or $H_{\max}$ when no such timeout exists. Here, $H_{\max}>0$ is the configured maximum prediction horizon. The capped horizon is
\begin{equation}
 H_c=\min\{H_{\max},d_{\mathrm{to}}\}.
\label{eq:horizon-cap}
\end{equation}
\begin{lemma}[Timeout-capped exponential horizon]
\label{lem:truncated-horizon}
Let $T$ denote the modeled time to the next arrival-or-completion event. If $T\sim\mathrm{Exp}(\widehat{\lambda}_E(t))$ with $\widehat{\lambda}_E(t)>0$ and $H_c\ge0$, then
\begin{equation}
 H=\E[\min(T,H_c)]
 =\frac{1-e^{-\widehat{\lambda}_E(t) H_c}}{\widehat{\lambda}_E(t)}.
\label{eq:truncated-horizon}
\end{equation}
\end{lemma}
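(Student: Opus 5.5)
The plan is to compute the expectation directly with the tail-integral formula for a nonnegative random variable. Write $\lambda=\widehat{\lambda}_E(t)>0$ for brevity. The variable $Y=\min(T,H_c)$ is nonnegative and bounded by $H_c$, so its expectation is finite and
\begin{equation*}
\E[Y]=\int_0^\infty \Prob(Y>s)\,ds .
\end{equation*}

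First I will identify the tail of $Y$. For $0\le s<H_c$ we have $\{Y>s\}=\{T>s\}$, so $\Prob(Y>s)=e^{-\lambda s}$ by the exponential survival function. For $s\ge H_c$, $\Prob(Y>s)=0$ because $Y\le H_c$. Hence
\begin{equation*}
\E[Y]=\int_0^{H_c}e^{-\lambda s}\,ds=\frac{1-e^{-\lambda H_c}}{\lambda},
\end{equation*}
which is exactly \cref{eq:truncated-horizon}.

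As a cross-check, I would also split the expectation on the events $\{T<H_c\}$ and $\{T\ge H_c\}$:
\begin{equation*}
\E[Y]=\int_0^{H_c}s\lambda e^{-\lambda s}\,ds+H_c e^{-\lambda H_c}.
\end{equation*}
Integrating by parts gives the same closed form, because the boundary term cancels the $H_c e^{-\lambda H_c}$ contribution.

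There are two edge cases to handle explicitly. When $H_c=0$, both sides equal $0$. When $H_c=d_{\mathrm{to}}=0$ or $H_c=H_{\max}$, the argument is unchanged, since it only needs $H_c\ge0$ and finite. By \cref{eq:horizon-cap}, $H_c\le H_{\max}<\infty$, which guarantees that $H_c$ is finite.

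There is no real obstacle here. The only point needing care is justifying the tail-integral identity for the nonnegative variable $Y$, for example via Tonelli's theorem. Alternatively, one can invoke the standard memoryless and survival-function facts for exponentials cited before \cref{ass:event-race}.
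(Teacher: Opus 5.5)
Your proof is correct and follows the paper's argument: you apply the tail-integral identity to $\min(T,H_c)$, reduce it to $\int_0^{H_c} e^{-\widehat{\lambda}_E(t)u}\,\mathrm{d}u$, and evaluate. Identifying the tail explicitly on $[0,H_c)$ and adding the integration-by-parts cross-check only spell out what the paper compresses into one line.
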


\begin{proof}
Using the tail-integral identity for a nonnegative random variable,
\begin{equation}
\E[\min(T,H_c)]
=\int_0^{H_c}\Prob(T>u)\,\mathrm{d}u
=\int_0^{H_c}e^{-\widehat{\lambda}_E(t) u}\,\mathrm{d}u,
\end{equation}
which evaluates to~\cref{eq:truncated-horizon}.
\end{proof}


\subsection{Valid spare-node targets}

The integer target $x$ is the number of non-computing nodes that the plan prepares as warm spare nodes.

Let $\mathcal{I}_e(t)$ be the set of idle nodes whose timeout has expired:
\begin{equation}
\mathcal{I}_e(t)
=
\left\{
i\in\mathcal{I}(t):
\text{node $i$'s timeout has expired}
\right\}.
\label{eq:expired-idle}
\end{equation}
These nodes are switched off in every candidate plan. Next, among the remaining idle nodes, let $\mathcal{I}_f(t)$ contain the nodes that cannot currently be switched off:
\begin{equation}
\mathcal{I}_f(t)
=
\left\{
i\in\mathcal{I}(t)\setminus\mathcal{I}_e(t):
\text{node $i$ cannot switch off}
\right\}.
\label{eq:forced-warm-idle}
\end{equation}
When the timeout policy is enabled, this includes idle nodes whose timeout has not yet expired. Hence, these nodes must remain warm. The other idle nodes may remain warm or be switched off depending on the selected target $x$.

The number of nodes that can be used as warm spare nodes is
\begin{equation}
x_{\mathrm{real}}
=
|\mathcal{U}(t)|
+
|\mathcal{I}(t)\setminus\mathcal{I}_e(t)|
+
|\mathcal{S}(t)|.
\label{eq:realizable-spares}
\end{equation}

Let $S_{\min},S_{\max}\in\{0,\ldots,N\}$ be the configured minimum and maximum spare-node targets, where $S_{\min}\le S_{\max}$. Nodes that are already switching on and idle nodes in $\mathcal{I}_f(t)$ must remain part of every candidate plan. Therefore, the feasible target range is
\begin{align}
x_{\min}
&=
\max\!\left\{
|\mathcal{U}(t)|+|\mathcal{I}_f(t)|,
\min\{S_{\min},x_{\mathrm{real}}\}
\right\},
\nonumber\\
x_{\max}
&=
\max\!\left\{
x_{\min},
\min\{S_{\max},x_{\mathrm{real}}\}
\right\}.
\label{eq:target-range}
\end{align}

The value $S_{\max}$ is a preferred limit rather than a strict limit because the number of nodes that must remain warm may already exceed it. These bounds satisfy
\begin{equation}
0\le x_{\min}\le x_{\max}\le x_{\mathrm{real}},
\end{equation}
so the candidate set $\{x_{\min},\ldots,x_{\max}\}$ is always nonempty.

For each candidate target $x$, the plan keeps all nodes that are already switching on and all idle nodes in $\mathcal{I}_f(t)$. It then keeps additional idle nodes or wakes sleeping nodes until the target is reached. Nodes in $\mathcal{I}_e(t)$ are switched off, and other unused idle nodes may also be switched off.

\subsection{Power-induced waiting estimate}

Power-induced waiting is the delay caused by non-computing nodes that are not ready to run a newly arriving job. Waiting caused by nodes that are still running jobs is not included because it is the same for every spare-node target and does not affect which target is selected.

Let
\begin{equation}
K=N-|\mathcal{C}(t)|
\end{equation}
be the number of non-computing nodes at time $t$. For a candidate target $x$, let $d_i(x)$ be the time until non-computing node $i$ becomes ready under that plan.
An idle node that remains warm has zero delay. A node that is already switching on, or is instructed to wake immediately, uses the expected remaining transition time $\phi(\cdot;\widehat{\lambda}_E(t))$ from (59) in Appendix B of the supplemental materials.
A node that remains asleep uses its full wake-up time. A node that is switching off includes both its expected remaining switch-off time and its wake-up time.

For $K>0$, sort the readiness delays as
\begin{equation}
 d_{(1)}(x)\le\cdots\le d_{(K)}(x),
\end{equation}
where $d_{(r)}(x)$ is the estimated time by which the $r$-th non-computing node becomes ready. The expected request-weighted readiness delay is
\begin{equation}
 \bar d(x)=\sum_{r=1}^{N}\widehat{p}_r(t)\,
 d_{(\min\{r,K\})}(x),
\end{equation}
where $\widehat{p}_r(t)$ is the estimated probability that the next job requests $r$ nodes. Next, the power-induced waiting estimate is
\begin{equation}
 W(x)=
 \begin{cases}
 \pi_A(t)\bar d(x), & K>0,\ \pi_A(t)>0,\\
 0, & K=0\ \text{or}\ \pi_A(t)=0.
 \end{cases}
\label{eq:expected-wait}
\end{equation}
Here, $\pi_A(t)$ is the probability that the next event is an arrival. When $r>K$, the index is limited to $K$ because any additional waiting is caused by nodes that are still running jobs and is the same for every candidate plan.
The estimate $W(x)$ uses the expected remaining transition times before sorting the node delays. It is therefore an approximation of the expected waiting time rather than its exact value.

\begin{figure*}[t]
    \centering
    \includegraphics[width=0.98\linewidth]{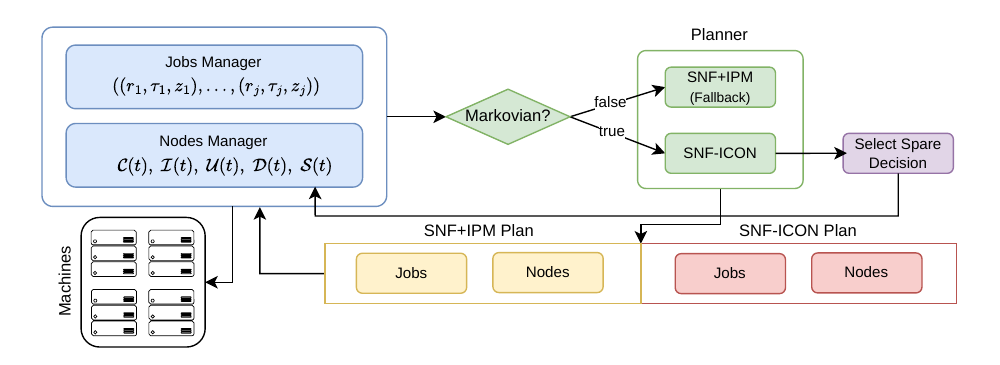}
    \caption{Implementation architecture in SPARS~\cite{amrizal2026}.}
    \label{fig:architecture}
\end{figure*}
\subsection{Estimated energy and spare-node selection}

For each node, the controller estimates the energy used by sleeping, switching, and idle states over the interval $[t,t+H]$. Here, $H$ is the mean capped time until the next event from~\cref{eq:truncated-horizon}. Compute energy is not included because it is required by running jobs and is not controlled by the spare-node target.
For example, an idle node that remains warm uses $P_i^{\mathrm{idle}}H$, where $P_i^{\mathrm{idle}}$ is the idle power of node $i$. A node with $a_i$ units of switch-on time remaining uses
\begin{equation}
 E_i^{\uparrow}(H)=
 P_i^{\uparrow}\min(a_i,H)+
 P_i^{\mathrm{idle}}\pos{H-a_i},
\label{eq:energy-up}
\end{equation}
where $P_i^{\uparrow}$ is its switch-on power. Energy for waking, sleeping, and switching off is calculated in the same way. Let $E_i^{\mathrm{nc}}(x,H)$ be the non-compute energy of node $i$ under candidate target $x$. The estimated total non-compute energy of the whole system is
\begin{equation}
 E(x)=\sum_{i=1}^{N}E_i^{\mathrm{nc}}(x,H).
\label{eq:expected-energy}
\end{equation}
This value is an estimate based on the mean capped horizon $H$, rather than the exact expected energy over a random time period. Next, the set of candidate spare-node targets is
\begin{equation}
 \mathcal{X}(t)=\{x_{\min},\ldots,x_{\max}\}.
\end{equation}
The total candidate cost and selected target are
\begin{align}
 J(x)&=\alpha W(x)+\beta E(x),\nonumber\\
 x^{\star}&=\operatorname*{arg\,lex\,min}_{x\in\mathcal{X}(t)}
 \bigl(J(x),W(x),E(x),x\bigr).
\label{eq:objective}
\end{align}
where $\alpha,\beta\ge0$ are the weights for waiting time and wasted energy. The lexicographic minimization first minimizes $J(x)$, then $W(x)$, then $E(x)$, and finally $x$; hence $x^{\star}$ is uniquely and deterministically selected.

\subsection{Arrival-recency gate, completion shutdown, and idle timeout}
\label{subsec:gate}

\begin{definition}[Arrival-recency gate]
\label{def:arrival-gate}
Before the first recorded arrival, the arrival-recency gate is closed. After that, let $t_A$ be the time of the latest recorded arrival and let $T_A^{\max}\ge0$ be the configured recency window. The gate is \emph{open} at time $t$ when
\begin{equation}
 t-t_A\le T_A^{\max}.
\label{eq:gate}
\end{equation}
\end{definition}



The arrival-recency gate is separate from the fallback rule and is checked only in ICON mode after all queued jobs have been handled. When open, it allows warm-spare planning for future jobs; when closed, ICON remains active but does not prepare spare nodes. Unlike the Markovianity screen, which checks whether recent workload data fit the model, the gate checks whether the latest arrival is recent enough to make that data useful.

If a job finishes while the gate is closed, newly released nodes are switched off unless they are needed by a queued or planned job. Other idle nodes follow the normal timeout policy: each is switched off after remaining unused for $\Delta_t$, and the next timeout also limits the energy horizon in~\cref{eq:horizon-cap}.

\section{Implementation in SPARS}
\label{sec:implementation}

The implementation for this study extends the SNF+IPM baseline within the SPARS discrete-event simulation framework~\cite{amrizal2026}. The source code, experimental configurations, and visualization scripts used in this study are publicly available in the SPARS-ICON repository \cite{github} to facilitate reproducibility. \cref{fig:architecture} summarizes the resulting control flow architecture. At each scheduler invocation, the jobs and nodes managers provide the job and node-state information to the Markovianity screen. If the Markovianity check is rejected, the invocation follows the baseline SNF+IPM planner. Otherwise, it follows the SNF-ICON planner and evaluates the warm-spare target. The resulting job schedule and power-state actions are returned to the SPARS managers for execution by the simulated machines.


The rest of this section describes the implementation details in SPARS.

\subsection{Event detection and diagnostics}

At every scheduler invocation, the implementation reevaluates the screen and records \texttt{policy\_mode} as either \texttt{stochastic\_spare\_capacity} or \texttt{snf\_fallback}. The forced-ICON no-fallback variant is exposed only as an experimental ablation. The log separately records whether the call is a spare-planning point and whether the arrival-recency gate is open. This separation prevents a closed gate or a nonempty queue from being miscounted as fallback. Diagnostics include rejection reasons, sample counts, means, coefficients of variation, lag-one correlations, KS distances and thresholds, arrival and service rates, runtime-ratio statistics, active-job remaining-time predictions, requested-node PMF, future planned jobs, immediate wakes, future wake callbacks, spare-target candidate costs, timeout actions, and nodes attributed to a completion.

\subsection{Energy-aware node choice and release maps}

The SPARS implementation realizes the sequential future-SNF planner defined in~\cref{sec:visible-work} using the simulator's node lists and release maps. In ICON mode, active compute segments are replaced by job-specific predicted finish times, and planned jobs append synthetic compute segments. Fallback mode instead invokes the corresponding SNF+IPM planning routines. Both modes may schedule wakes for queued jobs, but only ICON performs warm spare node planning.

\subsection{Default parameters}


Unless otherwise stated, the experiments use the default parameter values shown in~\cref{tab:defaults}. They are starting points rather than universally optimal values.

\begin{table}[t]
\caption{Implemented default parameters.}
\label{tab:defaults}
\centering
\begin{tabular}{lll}
\toprule
Parameter & Symbol & Default \\
\midrule
Waiting weight & $\alpha$ & $5000$ \\
Energy weight & $\beta$ & $1$ \\
Initial arrival rate & $\lambda_0$ & $1/3600\ \mathrm{s}^{-1}$ \\
Arrival EMA factor & $\rho_A$ & $0.2$ \\
Service/ratio EMA factor & $\rho_S$ & $0.1$ \\
Resource EMA factor & $\rho_R$ & $0.1$ \\
Log-ratio sigma floor & $\sigma_{\min}$ & $0.25$ \\
Minimum spare nodes & $S_{\min}$ & $1$ \\
Maximum spare nodes & $S_{\max}$ & platform size \\
Maximum horizon & $H_{\max}$ & $24$ h \\
Arrival-recency window & $T_A^{\max}$ & $12$ h \\
Markovianity fallback & -- & enabled \\
Screen time window & $T_M$ & $2$ h \\
Minimum screen samples & $n_{\min}$ & $3$ \\
Maximum screen samples & $n_{\max}$ & $1000$ \\
CV tolerance & $\delta_{cv}$ & $0.45$ \\
Maximum $|\widehat{\rho}_1|$ & $\delta_{\rho}$ & $0.65$ \\
KS multiplier & $\kappa_m$ & $1.25$ \\
KS scale parameter & $\kappa_0$ & $3.0$ \\
Require service samples & -- & enabled \\
Stale-completion shutdown & -- & enabled \\
Respect idle timeout & -- & enabled \\
Timeout duration & $\Delta_t$ & 2 h \\
\bottomrule
\end{tabular}%
\end{table}

\subsection{Computational complexity}

Let $q$ be the number of queued jobs, $N$ the number of nodes, $n\le n_{\max}$ the number of samples in each screen window, and $n_{\mathrm{act}}$ the number of active jobs. For each required series, the screen computes the summary statistics and lag-one autocorrelation in $\mathcal{O}(n)$ time. Computing the KS distance requires sorting the samples, giving $\mathcal{O}(n\log n)$ time per series. Predicting the remaining times of active jobs and updating their node release times takes $\mathcal{O}(n_{\mathrm{act}}+N)$ time, assuming each active job and node is visited once. This does not include the cost of copying release maps, which depends on how they are stored.

If the queue is not already kept in SNF order, sorting it takes $\mathcal{O}(q\log q)$ time. For each job in the future plan, the controller sorts at most $N$ node release times and at most $N$ node-selection keys. Each planning step therefore takes $\mathcal{O}(N\log N)$ time, and planning at most $q$ queued jobs takes $\mathcal{O}(qN\log N).$
Building the plan and its wake-up times requires storage proportional to the total number of assigned job--node pairs, $\sum_j r_j$.

The spare-node optimizer considers at most $N+1$ targets. For each target, building the node plan and estimating its energy take $\mathcal{O}(N)$ time, while sorting the node readiness times takes $\mathcal{O}(N\log N)$. The complete spare-node optimization therefore takes $\mathcal{O}(N^2\log N).$
If the candidate targets are evaluated one at a time, the additional working storage is $\mathcal{O}\left(n+N+\sum_j r_j\right)$, excluding the waiting queue, release-map copies, logs, and other simulator state.

These are analytical bounds based on the described algorithm. The experiments do not measure scheduler execution time, so the actual decision time must be checked through implementation profiling.

\begin{figure*}[!t]
    \centering
    \includegraphics[width=0.98\linewidth]{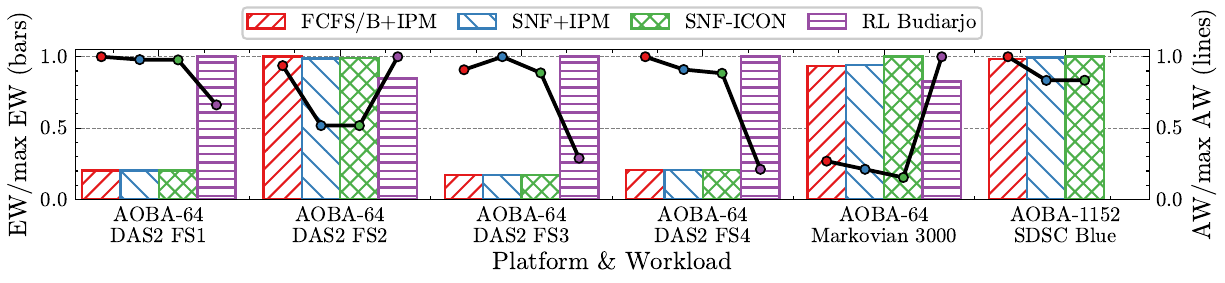}
    \caption{Base-configuration comparison of energy waste and average waiting, respectively denoted as EW and AW. RL Budiarjo uses a separately trained policy for each dataset; results are reported only for datasets with an available RL experiment.}
    \label{fig:base-barplot}
\end{figure*}

\section{Experimental Methodology}
\label{sec:experiments}

\subsection{Evaluation scope}

The evaluation focuses on average queue waiting time and total wasted energy as the primary outcome metrics, since they directly represent the scheduling-delay and energy-efficiency objectives of the proposed method. The remaining measurements, tests, and sensitivity analyses are used to interpret the algorithm's behavior and explain the observed energy--delay trade-offs. Accordingly, the evaluation addresses three questions supported by the supplied results. First, how do the fixed SNF-ICON configuration's average queue waiting time and total wasted energy compare with those of FCFS/B+IPM, SNF+IPM, and the available RL controller across the six workload--platform cases? Second, does the Markovianity screen distinguish the generated exponential workload from the DAS2 traces, and how do disabling fallback or the arrival-recency gate affect the resulting operating point? Third, how sensitive is the energy--delay trade-off to the reward weight, Markovianity lookback horizon, and platform model, and do the aggregate results conceal a size-dependent waiting-time penalty?

The plotting scripts denote the complete proposed method as \textbf{SNF-ICON}. The \textbf{NG} (No-Gate) suffix disables the arrival-recency gate, the \textbf{NF} (No-Fallback) suffix disables Markovianity-based fallback and therefore forces the ICON path, and \textbf{NGNF} (No-Gate No-Fallback) disables both safeguards. These four variants are compared with \textbf{SNF+IPM}, which retains SNF scheduling but uses IPM as the power manager, and \textbf{FCFS/B+IPM}, which is the PSAS+IPM configuration from the cited study \cite{prasasta2026}.

\subsection{Workloads and platform configurations}

The base evaluation contains six workload--platform cases. Five workloads run on a 64-node AOBA-derived configuration: the first 3000 jobs from each of the DAS2 FS1--FS4 traces and a 3000-job generated Markovian workload. The DAS2 traces originate from a multi-cluster production workload described in prior characterization work and are distributed through the Parallel Workloads Archive \cite{li2004workload,feitelson2014}. The generated workload uses exponential interarrival and service assumptions.

For the RL Budiarjo baseline, the subsequent 1000 jobs from each corresponding DAS2 trace are used to train a separate policy for that dataset. Training follows the curriculum simple $\rightarrow$ real $\rightarrow$ complex, where the simple and complex workloads are generated from the characteristics of the corresponding real training segment. The sixth evaluation case maps the SDSC Blue workload to an AOBA-derived 1152-node configuration. The AOBA configurations follow prior digital-twin modeling of Supercomputer AOBA \cite{ohmura2022}.

The Markovianity-window sweep evaluates the five AOBA-64 workloads by varying $T_M$ from 1~h to 24~h, whereas the objective-weight sweep varies $\alpha$ from 10 to 20{,}000 while fixing $\beta=1$, with all other parameters fixed at their base values in both sweeps. Cross-platform sensitivity evaluates three representative workloads---DAS2 FS2, Markovian 3000, and SDSC Blue---on corresponding AOBA and Taurus models, yielding six workload--platform panels. DAS2 FS2 and Markovian 3000 use 64-node models, whereas SDSC Blue uses 1152-node models.

\section{Results and Discussion}
\label{sec:results}

\begin{figure}[h]
    \centering
    \includegraphics[width=\linewidth]{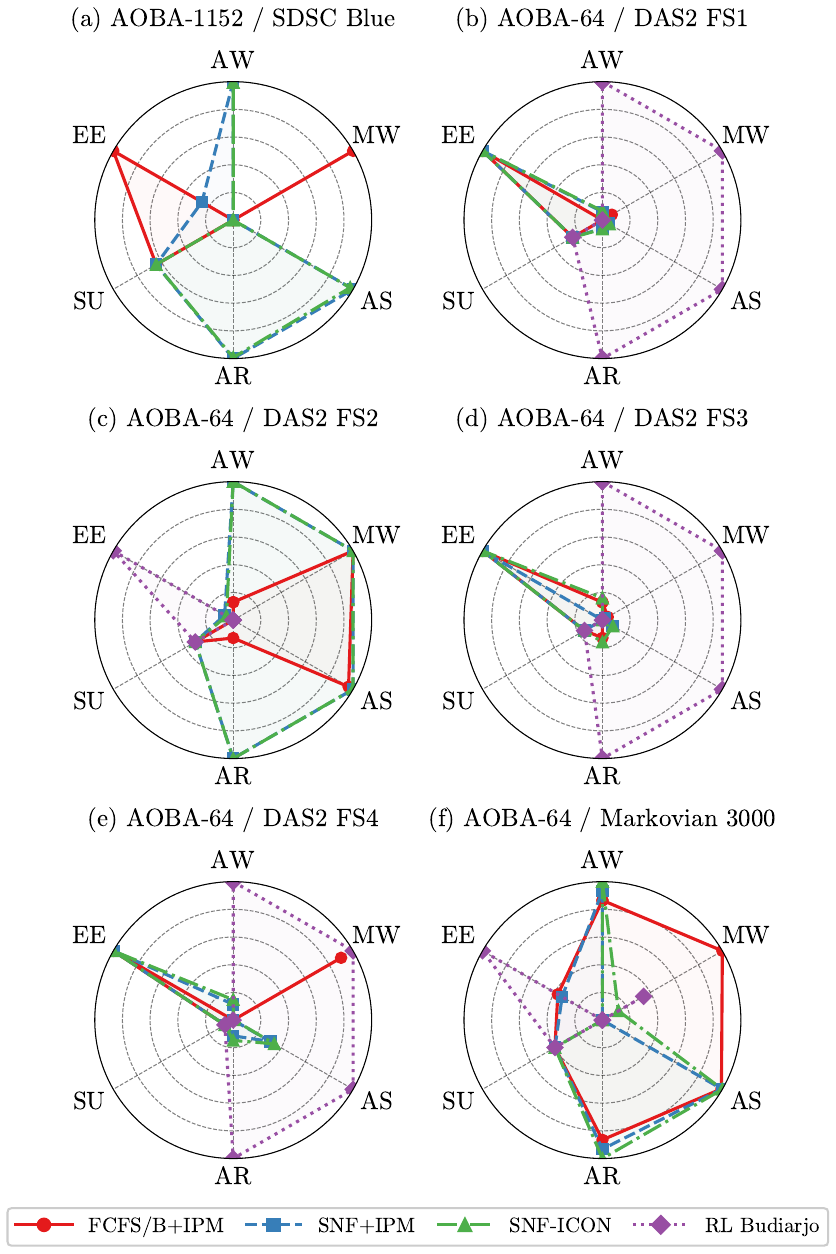}
    \caption{Normalized multi-metric profiles. AW: average wait, MW: maximum wait, AS: average slowdown, AR: average response, SU: system utilization, EE: energy efficiency. Delay and wasted-energy metrics are converted to efficiency scores so that larger radial values are better.}
    \label{fig:kiviat-base}
\end{figure}

\subsection{Base-configuration comparison across workloads}

\cref{fig:base-barplot} compares the fixed base configuration across the six workload-platform cases, with each metric normalized by the worst result within its panel. Relative to FCFS/B+IPM, SNF-ICON reduces mean waiting time in every panel: approximately by 2.2\% on DAS2 FS1, 44.8\% on DAS2 FS2, 2.4\% on DAS2 FS3, 11.6\% on DAS2 FS4, 42.4\% on the generated Markovian workload, and 16.4\% on SDSC Blue. Relative to SNF+IPM, the corresponding changes are improvements of approximately 0.2\%, 0.1\%, 11.3\%, 2.9\%, and 26.9\% in the first five cases, followed by a 0.15\% increase in waiting time on SDSC Blue. The base configuration therefore preserves the main waiting-time benefit of SNF and provides additional reductions on five of the six cases.

\begin{figure}[t]
    \centering
    \begin{subfigure}[t]{0.94\linewidth}
        \centering
        \includegraphics[width=\linewidth]{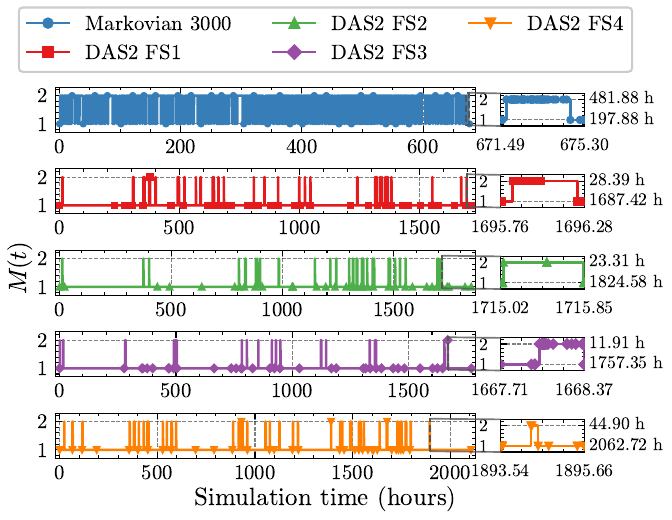}
        \caption{Mode sequence and accumulated duration.}
        \label{fig:markov-mode}
    \end{subfigure}

    \begin{subfigure}[t]{0.94\linewidth}
        \centering
        \includegraphics[width=\linewidth]{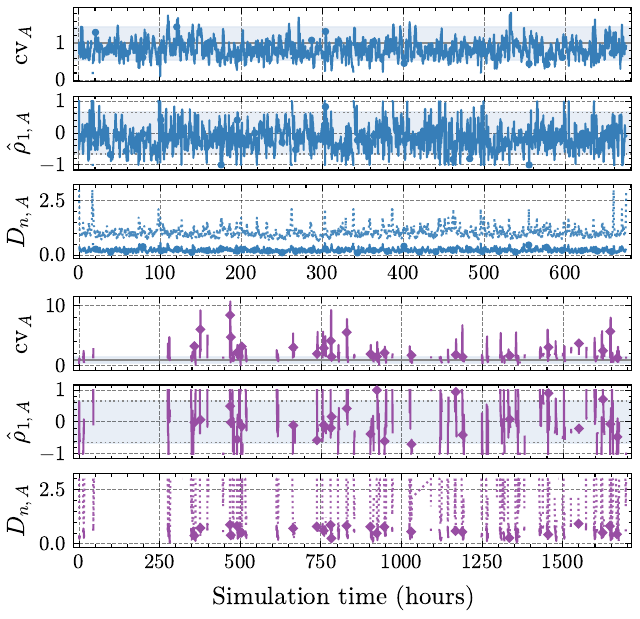}
        \caption{Interarrival diagnostics for Markovian 3000 and DAS2 FS3.}
        \label{fig:markov-conditions}
    \end{subfigure}

    \caption{Mode occupancy and recent-window Markovianity diagnostics. Mode 1 is SNF+IPM fallback and mode 2 is ICON.}
    \label{fig:markov-analysis}
\end{figure}

\begin{figure}[t]
    \centering
    \includegraphics[width=\linewidth]{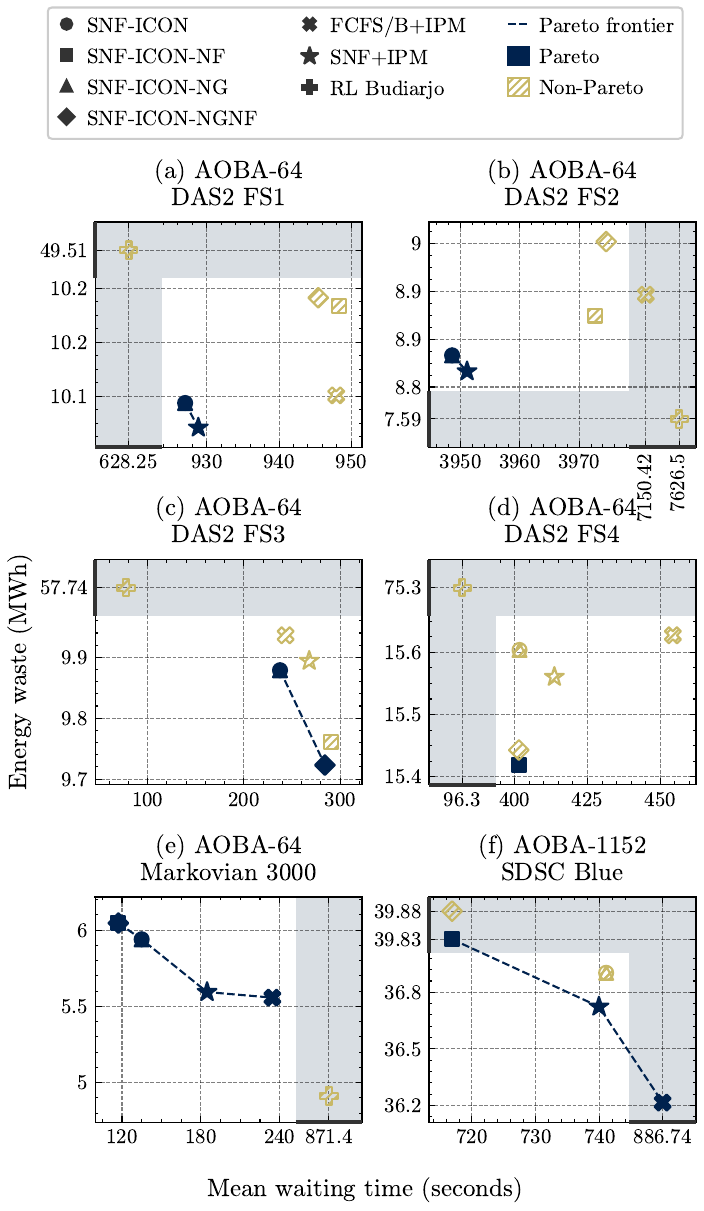}
    \caption{Waiting-time--energy comparison of the complete method and its variants across the six workload--platform cases.}
    \label{fig:variant-pareto}
\end{figure}

The energy differences among the heuristic methods are much smaller than the waiting-time differences in most cases. SNF-ICON consumes approximately 9.88~MWh on DAS2 FS3 and 15.6~MWh on DAS2 FS4. It is within about 1\% of at least one heuristic baseline in five cases. The exception is the generated Markovian workload, where its 5.94~MWh energy waste is approximately 6.2\% above SNF+IPM and 6.9\% above FCFS/B+IPM. Thus, the fixed configuration obtains its largest delay reductions without the multi-fold energy increase exhibited by the low-delay RL points, although the generated workload shows a measurable rather than negligible energy premium.

The dataset-specific RL policies produce more extreme trade-offs. On DAS2 FS1, FS3, and FS4, RL reduces mean waiting time by approximately 32\%, 67\%, and 76\%, respectively, but consumes approximately 4.9, 5.8, and 4.8 times the corresponding SNF-ICON energy. Conversely, on DAS2 FS2 and Markovian 3000, RL reduces energy consumption by approximately 14\% and 17\%, respectively, but incurs approximately 1.9 and 6.5 times the corresponding SNF-ICON mean waiting time. 
The available RL policy therefore supplies workload-specific extreme points.

\subsection{Multi-metric performance profile}

The Kiviat profiles in~\cref{fig:kiviat-base} confirm the trade-offs observed in~\cref{fig:base-barplot}. On DAS2 FS2 and Markovian 3000, SNF-ICON reaches or approaches the outer envelope on the delay spokes. On DAS2 FS1, FS3, and FS4, RL performs better on several delay spokes but has the weakest or nearly weakest energy-efficiency score. SNF-ICON and SNF+IPM are nearly identical on SDSC Blue, consistent with their 0.15\% waiting-time difference and small energy separation. Overall, SNF-ICON generally matches or improves the delay profile of SNF+IPM while remaining substantially more energy-efficient than the low-delay RL points. Because each spoke is normalized independently, the profiles indicate balanced performance rather than effect size.

\begin{figure*}[t]
    \centering
    \includegraphics[width=0.98\linewidth]{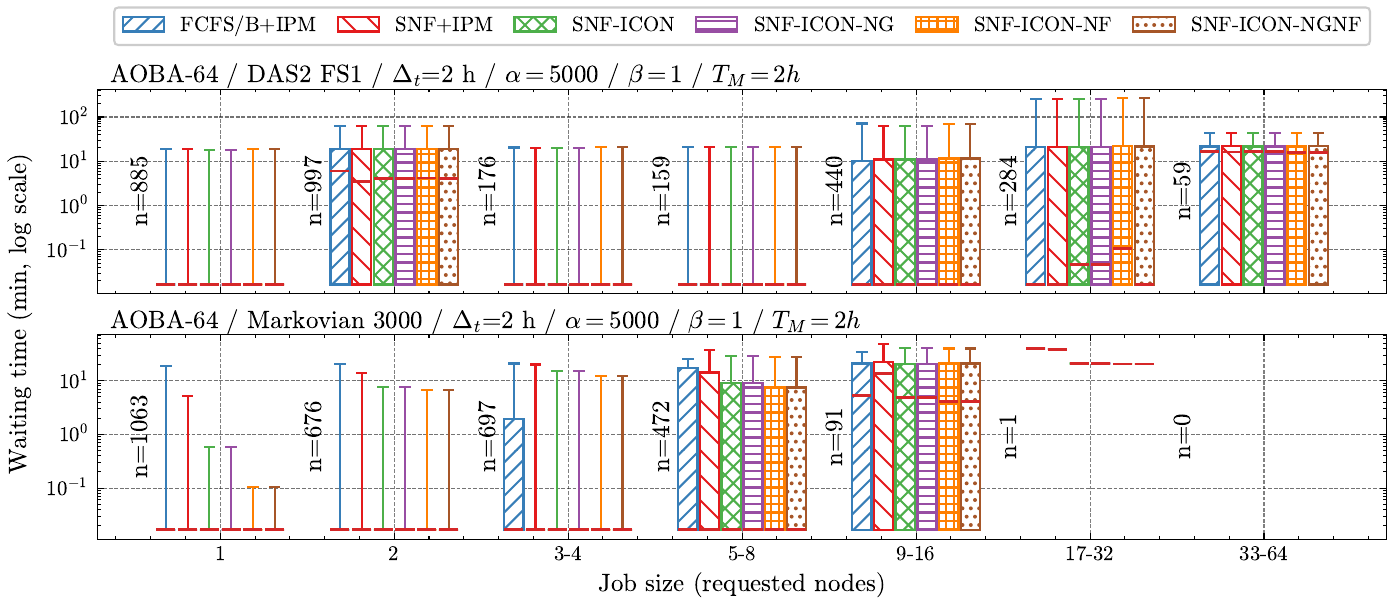}
    \caption{Waiting time stratified by requested node count. Boxes show the interquartile range, center lines show medians, and whiskers show the 5th--95th percentiles. Zero waits are displayed at one second on the logarithmic axis.}
    \label{fig:starvation-new}
\end{figure*}

\subsection{Why frequent fallback does not remove the advantage}

In~\cref{fig:markov-analysis}, mode~1 represents SNF+IPM fallback and mode~2 represents ICON. The values on the right side of each panel in~\cref{fig:markov-mode} show the total time spent in ICON mode at the top and fallback mode at the bottom. On the Markovian 3000 dataset, the policy spends 70.9\% of its simulated time in ICON mode. In contrast, DAS2 FS1--FS4 spend only 0.7--2.1\% of their time in ICON mode and therefore operate mainly in fallback.

The diagnostic panels in~\cref{fig:markov-conditions} explain this difference. Blank intervals indicate that too few recent samples were available, while the shaded regions and KS limits show the acceptance ranges. The DAS2 FS3 workload often enters fallback because arrivals and completions are sparse or because the observed statistics do not satisfy the workload screen.

Despite the frequent fallbacks, many fallback intervals occur during quiet periods with few job arrivals. The scheduling mode during these periods has less effect because only a small number of jobs are affected. ICON is more likely to become active during busier intervals, when enough samples are available and more jobs compete for resources. Its release prediction and warm-spare planning can therefore have a greater effect during these important periods.

This helps explain why frequent fallback does not remove the overall benefit of SNF-ICON. The results suggest that short ICON intervals during busier periods contribute to the observed waiting-time improvements, although the experiments do not directly measure the performance of each mode within individual intervals.

\subsection{Variant comparison and empirical Pareto frontier}

\cref{fig:variant-pareto} compares the complete method and its gate/fallback variants at the base parameters. Throughout the waiting-time--energy scatter plots, light-gray bands mark compressed portions of the axes containing isolated extreme points, thereby keeping the main data cluster visually distinguishable. The complete SNF-ICON point lies on the empirical Pareto frontier in four of the six panels, on DAS2 FS1, FS2, FS3, and Markovian 3000. When all proposed variants are considered, the SNF-ICON family contributes at least one nondominated point in every panel. On DAS2 FS1 and FS2, the complete method and SNF+IPM form closely spaced energy--delay trade-offs. On DAS2 FS3, the complete point supplies the low-delay endpoint, while NGNF supplies the lower-energy endpoint. On DAS2 FS4, the NF point dominates the complete point by approximately 0.06\% in waiting time and 0.95\% in energy waste. Several proposed points also lie on the Markovian 3000 frontier.

SDSC Blue is the exception to the frontier claim. There, SNF+IPM slightly dominates the complete configuration, reducing mean waiting time by approximately 0.15\% and wasted energy by 0.41\%. However, the NF configuration supplies another nondominated operating point, reducing waiting time by approximately 3.1\% relative to SNF+IPM while increasing wasted energy by approximately 8.6\%. 

Across the panels, NG usually overlaps or remains very close to the complete method, while NGNF remains close to NF. The visible change from disabling fallback is therefore larger than the visible change from disabling the arrival-recency gate at the base settings. The proposed family broadens the available energy--delay trade-off on most workloads, but no single variant is nondominated everywhere.

\subsection{Starvation and waiting time by requested job size}

\cref{fig:starvation-new} evaluates whether the aggregate advantage hides a size-dependent waiting-time penalty. The upper panels report DAS2 FS1 and the lower panels report Markovian 3000 at base configuration. Waiting-time distributions generally broaden as requested node count increases because larger rigid jobs require more nodes to become available simultaneously. In many small- and medium-size bins, SNF-ICON has lower or comparable medians to FCFS/B+IPM, and the 5th--95th percentile ranges do not show a consistent additional widening relative to FCFS/B.

The large-job evidence must be interpreted by sample count. The DAS2 FS1 workload contains 284 jobs requesting 17--32 nodes and 59 requesting 33--64 nodes. These bins have broad waiting-time ranges under all methods, but the proposed variants remain broadly comparable with FCFS/B+IPM and SNF+IPM. The Markovian 3000 workload contains only one 17--32-node job and no 33--64-node jobs, so those panels cannot support a large-job fairness conclusion. The distributions therefore show no obvious additional starvation penalty from SNF-ICON, but they do not establish starvation freedom. 

\begin{figure}[t]
    \centering
    \includegraphics[width=\linewidth]{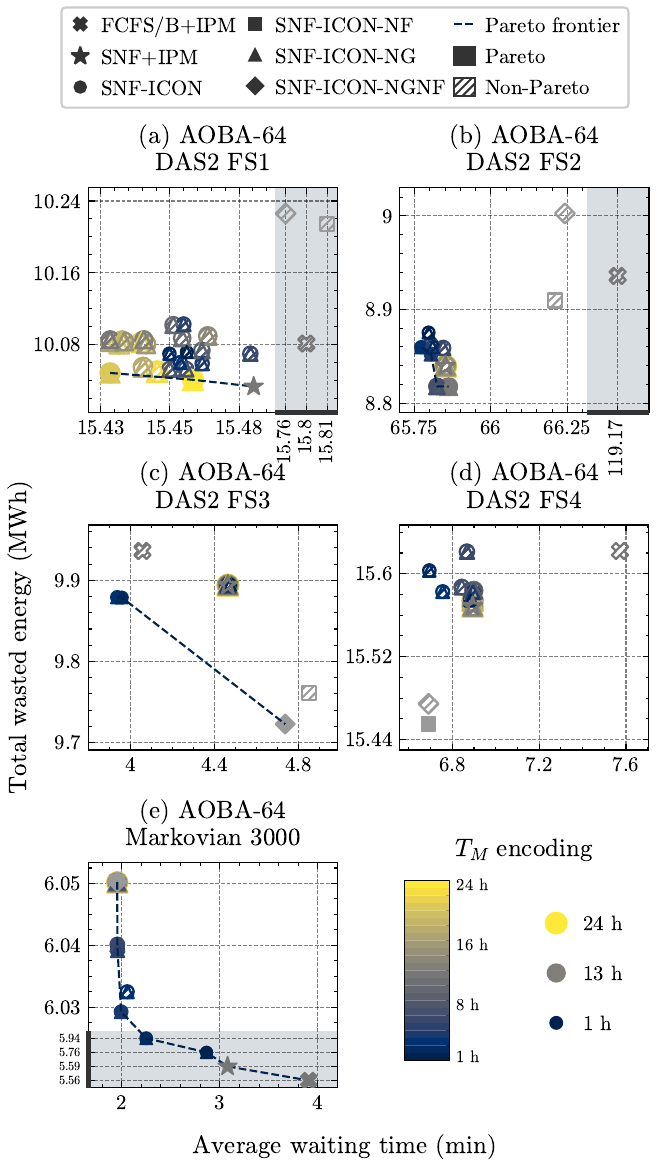}
        \caption{Markovianity lookback sweep.}
        \label{fig:param-ablation-tm}
\end{figure}
\begin{figure}[t]
    \centering
    \includegraphics[width=\linewidth]{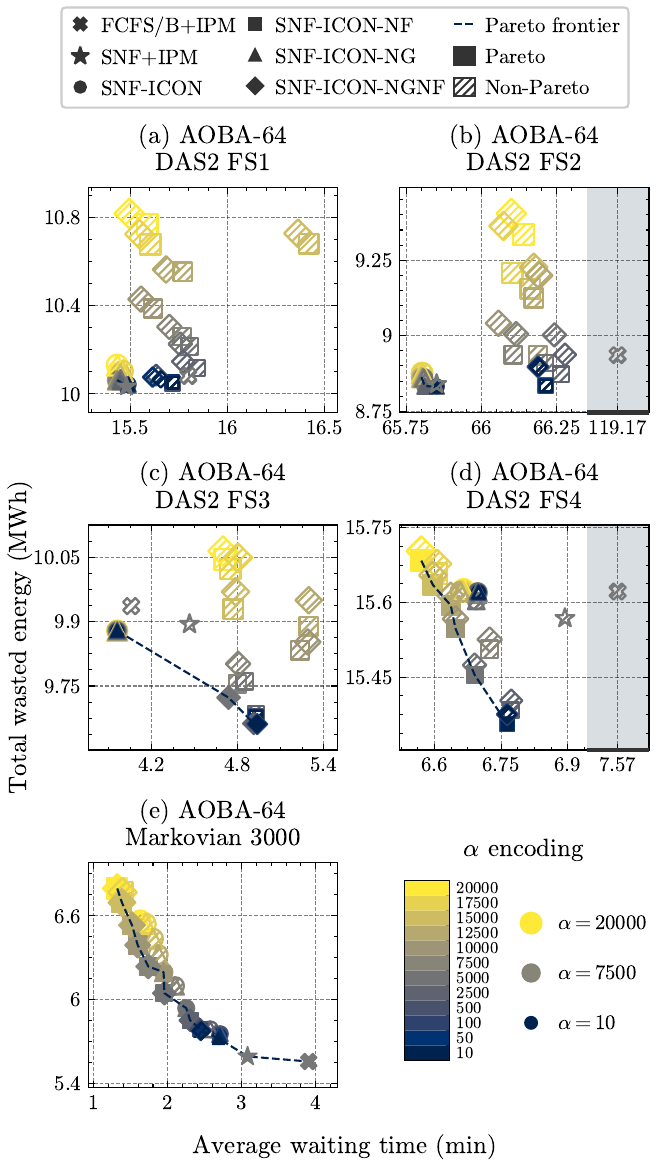}
    \caption{Reward-parameter sweep.}
    \label{fig:param-ablation-ab}
\end{figure}

\subsection{Parameter ablations}

\Cref{fig:param-ablation-tm} varies the Markovianity lookback horizon $T_M$ from 1 to 24~h. For DAS2 FS1, FS2, and FS4, the points remain close within each method variant. This indicates that changing $T_M$ has a smaller effect than enabling or disabling fallback. DAS2 FS3 shows a clearer difference between the fallback-enabled and forced-ICON variants, but the effect of $T_M$ within each variant is still limited.

Markovian 3000 is more sensitive to the lookback horizon. The complete SNF-ICON point moves from approximately 2.9~min and 5.76~MWh at $T_M=1$~h toward approximately 2.0~min and 6.04~MWh for longer windows. Thus, a longer window can reduce waiting time, but it may also increase energy use.

\Cref{fig:param-ablation-ab} varies the waiting-time weight $\alpha$ while keeping $\beta=1$. In general, increasing $\alpha$ gives more importance to waiting time, moving the results toward lower waiting and higher energy use. This trend is especially clear for Markovian 3000 and DAS2 FS4. Lower values of $\alpha$ favor energy savings, while higher values favor faster job starts. The best value therefore depends on the desired balance between waiting time and energy, rather than on one setting that works for every workload.

\begin{figure}[t]
    \centering
    \includegraphics[width=0.99\linewidth]{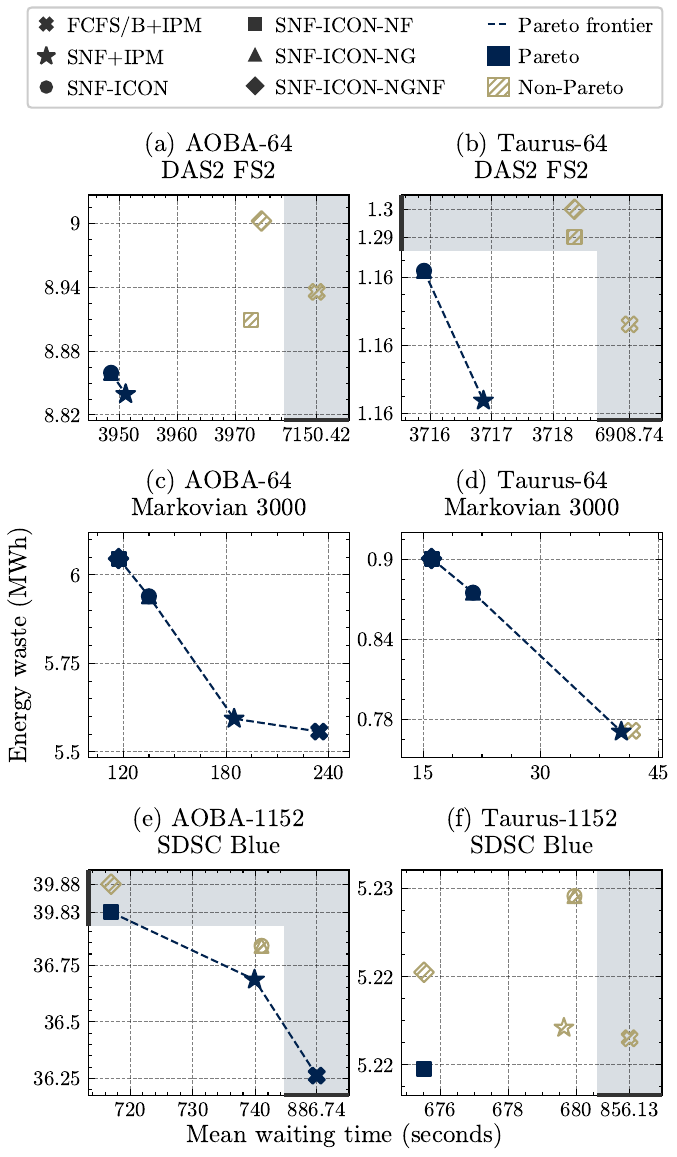}
    \caption{Cross-platform waiting-time--energy comparison on AOBA and Taurus configurations for three representative workloads. Dashed segments connect empirically nondominated points.}
    \label{fig:platform-new}
\end{figure}

\subsection{Cross-platform robustness}

\Cref{fig:platform-new} compares DAS2 FS2, Markovian 3000, and SDSC Blue on the AOBA and Taurus platform models. Changing the platform substantially changes the absolute waiting-time and energy scales. For example, SDSC Blue results use approximately 5.22~MWh on Taurus, compared with approximately 36.3--39.9~MWh on AOBA. Large differences also appear for the other workloads because the two platform models have different power use and node-transition behavior.

Even with these scale changes, the SNF-based methods remain in the lower-wait region across all six panels. The Markovian workload also shows a similar general trade-off on both platforms: configurations with lower waiting time tend to use more energy, while configurations with lower energy tend to have longer waits. This suggests that the main waiting-time benefit of SNF is not limited to one platform model.

However, the exact nondominated configuration changes across workloads and platforms, particularly for SDSC Blue. A gate or fallback setting that works well on AOBA may not give the same result on Taurus. These results show that SNF-ICON can operate across different platform models, but its parameters and operating variant should be selected using the power and transition characteristics of the target system.




\section{Conclusion}
\label{sec:conclusion}

This paper presented SNF-ICON, a two-mode method for energy-aware gang scheduling. At each scheduler invocation, the method checks recent job-arrival and completed-job data. If there are not enough data, or if the data do not fit the assumed model, the scheduler uses the SNF+IPM fallback. In this mode, queued jobs are still scheduled, nodes are still woken for planned jobs, and idle timeouts remain active. However, ICON-specific runtime predictions, next-event estimates, and warm-spare optimization are not used. The data are checked again at the next scheduler invocation, so the system can move between the two modes as the workload changes.

When the recent data pass the check, the scheduler uses ICON mode. Jobs that can run immediately are started first. The remaining queued jobs are then planned in SNF order using predicted node-release times, and sleeping nodes are scheduled to wake close to the predicted job start times. Warm-spare planning is considered only when no jobs remain in the queue and the scheduler is called by an initial, arrival, or completion event. The arrival-recency gate provides an additional check before warm-spare nodes are prepared. When this gate is closed, ICON remains active, but no spare nodes are prepared for future jobs. Completion-based shutdown and idle timeouts also prevent unused nodes from remaining active for too long.

The experiments show that SNF-ICON reduces average waiting time compared with FCFS/B+IPM in all six workload--platform cases. Its energy use remains close to the heuristic baselines in most cases, although the generated Markovian workload shows a noticeable energy increase. The workload screen selects ICON for much of the generated Markovian workload, while the DAS2 workloads operate mainly in fallback mode. The results also show that the best gate and fallback settings depend on the workload and platform. Therefore, checking recent workload behavior is useful for spare-node planning, but no single configuration gives the best waiting-time--energy trade-off in every case.

\section*{Acknowledgments}
This research is funded by the Indonesian Endowment Fund for Education (LPDP) on behalf of the Indonesian Ministry of Higher Education, Science and Technology and managed under the EQUITY Program (Contract Number: 4301/B3/DT.03.08/2025 and 10107/UN1.P/Dit-Keu/HK.08.00/2025).

\raggedbottom
\bibliographystyle{IEEEtran}
\balance
\bibliography{references}

\begin{IEEEbiographynophoto}{Reza Pulungan} received the bachelor's degree in computer science from the Universitas Gadjah Mada, Yogyakarta, Indonesia, in 1999, the master's degree in telematics from the Universiteit Twente, Enschede, The Netherlands, in 2002, and the Ph.D. degree in computer science from the Universit\"at des Saarlandes, Saarbr\"ucken, Germany, in 2009.

He is currently a Professor with the Department of Computer Science and Electronics, Universitas Gadjah Mada. His research background is in stochastic processes, especially Markov processes and phase-type distributions, and modeling and analyzing reactive systems. He is interested in and working on learning algorithms, including reinforcement learning, and their applications in diverse fields.
\end{IEEEbiographynophoto}

\begin{IEEEbiographynophoto}{Raka Satya Prasasta} received the bachelor's degree in informatics from the Universitas Ahmad Dahlan, Yogyakarta, Indonesia, in 2026. His research interests include HPC simulation frameworks, as well as optimization techniques in HPC systems, particularly energy-efficient job scheduling and dynamic power-state management. 
\end{IEEEbiographynophoto}

\begin{IEEEbiographynophoto}{Santana Yuda Pradata} received the bachelor's degree in computer science from Universitas Gadjah Mada, Yogyakarta, Indonesia, in 2026.

From August 2024 to January 2025, he was a Research Intern with the Research Center for Quantum Physics, National Research and Innovation Agency (BRIN), Indonesia. His research interests include quantum computing, quantum information, theoretical computer science, and optimization in computer science.
\end{IEEEbiographynophoto}

\begin{IEEEbiographynophoto}{Mursalim} received the master's degree in informatics from the Universitas Dian Nuswantoro, Semarang, Indonesia, in 2020. He is currently working toward the PhD degree in Computer Science with the Universitas Gadjah Mada. His research interests include machine learning and job scheduling on HPC systems.
\end{IEEEbiographynophoto}

\begin{IEEEbiographynophoto}{Hiroyuki Takizawa} is currently a professor and the deputy director of the Cyberscience Center, Tohoku University. His research interests include high-performance computing systems and their applications. His focus is particularly on the productivity in high-performance computing. He received the B.E. Degree in Mechanical Engineering, and the M.S. and Ph.D. Degrees in Information Sciences from Tohoku University in 1995, 1997 and 1999, respectively. He is a member of IEEE CS, ACM SIGHPC, IEICE and IPSJ.
\end{IEEEbiographynophoto}

\begin{IEEEbiographynophoto}{Muhammad Alfian Amrizal} is an Assistant Professor at the Department of Computer Science and Electronics, Universitas Gadjah Mada. He received his Ph.D. degree in Information Science from the Graduate School of Information Sciences, Tohoku University, in 2017. His main research interests are in the area of distributed systems, such as high-performance computing (HPC) systems and wireless sensor networks (WSN), including dependability and energy efficiency, novel fault tolerance techniques, performance modeling, and optimization of such systems. He is also interested in broad topics of optimization problems and AI.
\end{IEEEbiographynophoto}

\clearpage

\appendices

\section{Job-Specific Runtime and Release Prediction}
\label[appendix]{app:runtime-prediction}

\begin{assumption}[Shared multiplicative runtime error]
\label{ass:runtime-ratio}
For job $j$, let $\tau_j$ be its requested runtime and let $D_j>0$ be its actual total runtime. The predictor models the ratio $D_j/\tau_j$ using a distribution shared across the workload. It assumes that this distribution changes slowly enough for its log-scale mean and variance to be tracked using exponential moving averages.
\end{assumption}

For a completed job $j$, let $D_j^{\mathrm{obs}}$ be its observed runtime (execution time). This value is also used as a sample in the service screen. The predictor uses the log ratio
\begin{equation}
Y_j=\log\left(\frac{D_j^{\mathrm{obs}}}{\tau_j}\right)
\label{eq:log-ratio}
\end{equation}
to measure the difference between the actual and requested runtimes. Requested runtimes are standard scheduler inputs, but history-based corrections can improve their accuracy \cite{utilpred,tsafrir2007}.

The estimates are initialized from the first valid completed-job observation by setting $m_Y=Y_1$ and $q_Y=Y_1^2$. For each subsequent valid completed-job observation, the first and second moments of the log ratio are updated using the service smoothing factor $\rho_S$:
\begin{align}
m_Y^{+}
&=
\rho_S Y_j+(1-\rho_S)m_Y,
\label{eq:ratio-first-moment}\\
q_Y^{+}
&=
\rho_S Y_j^2+(1-\rho_S)q_Y.
\label{eq:ratio-second-moment}
\end{align}
After each update, $m_Y$ and $q_Y$ are replaced by $m_Y^{+}$ and $q_Y^{+}$.

The estimated median actual-to-requested runtime ratio is
\begin{equation}
\widehat{\eta}_{\mathrm{med}}=\exp(m_Y),
\label{eq:ratio-median}
\end{equation}
and the estimated standard deviation of the log ratio is
\begin{equation}
\widehat{\sigma}_Y
=
\max\left\{
\sigma_{\min},
\sqrt{\max\{0,q_Y-m_Y^2\}}
\right\},
\label{eq:ratio-sigma}
\end{equation}
where $\sigma_{\min}>0$ prevents the estimated variance from becoming zero.

For a queued job, the predicted total runtime is
\begin{equation}
\widehat{D}_j
=
\tau_j\widehat{\eta}_{\mathrm{med}}.
\label{eq:queued-duration}
\end{equation}

For an active job, let $e_j(t)\ge0$ be its elapsed runtime at time $t$. All durations inside logarithms are expressed as positive values in seconds. The total runtime is modeled as log-normal \cite{corlett1957lognormal}:
\begin{equation}
\log D_j
\sim
\mathcal{N}\!\left(
\theta_j,\widehat{\sigma}_Y^{\,2}
\right),
\qquad
\theta_j=\log\tau_j+m_Y.
\label{eq:active-lognormal}
\end{equation}

For $e_j(t)>0$, define
\begin{equation}
\zeta_j(t)
=
\frac{\log e_j(t)-\theta_j}{\widehat{\sigma}_Y},
\end{equation}
and let
\begin{equation}
\overline{\Phi}(u)=1-\Phi(u)
\end{equation}
be the standard-normal survival function.

Because an active job has already run for more than $e_j(t)$, its predicted remaining runtime is
\begin{align}
\widehat{R}_j(t)
&=
\E[D_j-e_j(t)\mid D_j>e_j(t)],
\nonumber\\
&=
\exp\!\left(
\theta_j+\frac{\widehat{\sigma}_Y^{\,2}}{2}
\right)
\frac{
\overline{\Phi}\!\left(
\zeta_j(t)-\widehat{\sigma}_Y
\right)
}{
\overline{\Phi}\!\left(
\zeta_j(t)
\right)
}
-e_j(t).
\label{eq:conditional-remaining}
\end{align}

When $e_j(t)=0$, the predictor uses the unconditional mean
\begin{equation}
\exp\!\left(
\theta_j+\frac{\widehat{\sigma}_Y^{\,2}}{2}
\right).
\end{equation}

The queued-job estimate in~\cref{eq:queued-duration} uses the median of the predicted runtime distribution. The active-job estimate uses the conditional mean after accounting for the time already elapsed. These are different estimates and are used for different cases.

To derive~\cref{eq:conditional-remaining}, first write
\begin{equation}
\E[D_j\mid D_j>e_j(t)]
=
\frac{
\E[D_j\mathbf{1}\{D_j>e_j(t)\}]
}{
\Prob(D_j>e_j(t))
}.
\label{eq:conditional-lognormal-start}
\end{equation}

Let
\begin{equation}
V
=
\frac{\log D_j-\theta_j}{\widehat{\sigma}_Y}
\sim\mathcal{N}(0,1).
\end{equation}
The denominator in~\cref{eq:conditional-lognormal-start} is
\begin{equation}
\Prob(D_j>e_j(t))
=
\overline{\Phi}(\zeta_j(t)).
\end{equation}

The numerator is
\begin{align}
\E[D_j\mathbf{1}\{D_j>e_j(t)\}]
&=
\int_{\zeta_j(t)}^{\infty}
\exp\!\left(
\theta_j+\widehat{\sigma}_Y v
\right)
\frac{e^{-v^2/2}}{\sqrt{2\pi}}
\,\mathrm{d}v,
\nonumber\\
&=
\exp\!\left(
\theta_j+\frac{\widehat{\sigma}_Y^{\,2}}{2}
\right)
\nonumber\\
&\quad\times
\int_{\zeta_j(t)}^{\infty}
\frac{
e^{-(v-\widehat{\sigma}_Y)^2/2}
}{
\sqrt{2\pi}
}
\,\mathrm{d}v,
\nonumber\\
&=
\exp\!\left(
\theta_j+\frac{\widehat{\sigma}_Y^{\,2}}{2}
\right)
\overline{\Phi}\!\left(
\zeta_j(t)-\widehat{\sigma}_Y
\right).
\label{eq:conditional-lognormal-numerator}
\end{align}

Substituting the numerator and denominator into~\cref{eq:conditional-lognormal-start}, and then subtracting the elapsed runtime $e_j(t)$, gives~\cref{eq:conditional-remaining}. The implementation computes the survival-function ratio in log space to avoid numerical problems.

The predicted finish time of an active job is
\begin{equation}
t+\widehat{R}_j(t),
\end{equation}
while the predicted finish time of a queued job planned to start at $s_j$ is
\begin{equation}
s_j+\widehat{D}_j.
\end{equation}
These values are used in the release map.

\section{Expected Unfinished Transition Time}
\label[appendix]{app:residual-transition}

\begin{lemma}[Expected unfinished transition time]
\label{lem:residual-transition}
Suppose a node transition has $d\ge0$ units of time remaining. Let the modeled time to the next event be
\begin{equation}
T\sim\mathrm{Exp}(\widehat{\lambda}_E(t)),
\qquad
\widehat{\lambda}_E(t)>0.
\end{equation}
The expected transition time still remaining when the event occurs is
\begin{equation}
\phi(d;\widehat{\lambda}_E(t))
=
\E\!\left[\pos{d-T}\right]
=
d-
\frac{
1-e^{-\widehat{\lambda}_E(t)d}
}{
\widehat{\lambda}_E(t)
}.
\label{eq:residual}
\end{equation}
\end{lemma}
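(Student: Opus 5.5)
The plan is to reduce the claim to Lemma~\ref{lem:truncated-horizon}. Write $\lambda=\widehat{\lambda}_E(t)>0$ for brevity. The starting point is the pointwise identity
\begin{equation}
\pos{d-T}=d-\min(T,d),
\end{equation}
valid for every realization of $T\ge0$ and every $d\ge0$. To check it, split into cases. If $T\ge d$, both sides equal $0$. If $T<d$, both sides equal $d-T$.

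Taking expectations, linearity gives $\E[\pos{d-T}]=d-\E[\min(T,d)]$. This step is legitimate because $0\le\min(T,d)\le d$, so every quantity is bounded and integrable. Next I apply Lemma~\ref{lem:truncated-horizon} with the cap $H_c$ replaced by $d\ge0$. That lemma only uses $T\sim\mathrm{Exp}(\lambda)$ and a nonnegative cap, so it yields $\E[\min(T,d)]=(1-e^{-\lambda d})/\lambda$. Substituting this gives \cref{eq:residual}.

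As a cross-check, I would compute the expectation directly:
\begin{equation}
\E[\pos{d-T}]=\int_0^d (d-u)\lambda e^{-\lambda u}\,\mathrm{d}u.
\end{equation}
Integrating by parts, or equivalently using the tail formula $\int_0^d \Prob(T\le u)\,\mathrm{d}u=\int_0^d(1-e^{-\lambda u})\,\mathrm{d}u$, gives the same closed form.

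There is no real obstacle here. The only points needing care are the edge case $d=0$, where both sides vanish, and being explicit that the earlier lemma holds for an arbitrary nonnegative cap, not only the specific $H_c$ of \cref{eq:horizon-cap}.
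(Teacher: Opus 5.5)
Your proof is correct, but your main route differs from the paper's. The paper does not use \cref{lem:truncated-horizon}. It applies the tail-integral identity directly to $\pos{d-T}$, writing $\E[\pos{d-T}]=\int_0^d \Prob(T<d-u)\,\mathrm{d}u$, and then substitutes $v=d-u$ to reach $\int_0^d\bigl(1-e^{-\widehat{\lambda}_E(t)v}\bigr)\,\mathrm{d}v$. That is exactly the integral in your cross-check. Your main argument instead uses the pointwise decomposition $\pos{d-T}=d-\min(T,d)$ and reuses \cref{lem:truncated-horizon} with cap $d$. This shows structurally that $\phi(d;\widehat{\lambda}_E(t))+\E[\min(T,d)]=d$: the expected unfinished and expected completed parts of the transition at the next event add up to $d$. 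So $\phi$ is the complement of the same truncated-mean quantity that defines the horizon $H$, and no new integral is needed. The cost is that you must read the earlier lemma for an arbitrary nonnegative cap rather than the specific $H_c$ of \cref{eq:horizon-cap}. You correctly flag this, and the lemma's hypotheses and proof indeed require only $H_c\ge 0$. The paper's direct computation is self-contained and does not need that reading.
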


\begin{proof}
The unfinished transition time is positive only when the next event occurs before the transition finishes, that is, when $T<d$. Using the tail-integral identity,
\begin{align}
\E[\pos{d-T}]
&=
\int_0^d
\Prob(d-T>u)
\,\mathrm{d}u,
\nonumber\\
&=
\int_0^d
\Prob(T<d-u)
\,\mathrm{d}u,
\nonumber\\
&=
\int_0^d
\left(
1-e^{-\widehat{\lambda}_E(t)(d-u)}
\right)
\,\mathrm{d}u.
\end{align}

Using the change of variable $v=d-u$ gives
\begin{align}
\E[\pos{d-T}]
&=
\int_0^d
\left(
1-e^{-\widehat{\lambda}_E(t)v}
\right)
\,\mathrm{d}v,
\nonumber\\
&=
d-
\frac{
1-e^{-\widehat{\lambda}_E(t)d}
}{
\widehat{\lambda}_E(t)
}.
\end{align}
This proves~\cref{eq:residual}.
\end{proof}

\end{document}